\title{\bf Refinements of Levenshtein bounds in $q$-ary Hamming spaces }
\date{\today}
\newtheorem{theorem}{Theorem}[section]
\newtheorem{lemma}[theorem]{Lemma}
\newtheorem{corollary}[theorem]{Corollary}
\newtheorem{conjecture}[theorem]{Conjecture}
\theoremstyle{definition}
\newtheorem{example}[theorem]{Example}
\newtheorem{remark}[theorem]{Remark}
\author[P. Boyvalenkov]{Peter Boyvalenkov $^\dagger$}
\address{Institute of Mathematics and Informatics, Bulgarian Academy
	of Sciences, 8 G Bonchev Str., 1113 Sofia, Bulgaria \\ and Faculty
	of Mathematics and Natural Sciences, South-Western University,
	Blagoevgrad, Bulgaria.  } \email{peter@math.bas.bg}
\thanks{\noindent $^\dagger$ The research of this author was
	supported, in part, by a Bulgarian NSF contract DN02/2-13.12.2016.}
\author[D. Danev]{Danyo Danev $^{\dagger \dagger}$}
\address{Department of Electrical Engineering, Link\"oping University
	SE-58183 Link\"oping, Sweden} \email{danyo.danev@liu.se} \thanks{\noindent
	$^{\dagger \dagger}$ The research of this author was supported, in part, by  the Swedish Research Council (VR) and ELLIIT.}
\author[M. Stoyanova]{Maya Stoyanova$^*$}
\address{Faculty of Mathematics and Informatics, Sofia University, 5
	James Bourchier Blvd., 1164 Sofia, Bulgaria}
\email{stoyanova@fmi.uni-sofia.bg} \thanks{\noindent $^*$ The research
	of this author was supported, in part, by a Bulgarian NSF contract DN02/2-13.12.2016.}
\begin{document}

	\maketitle

\centerline{Dedicated to the memory of Professor Vladimir Levenshtein (1935-2017)}

	\begin{abstract}
		We develop refinements of the Levenshtein bound in $q$-ary Hamming
		spaces by taking into account the discrete nature of the distances
		versus the continuous behavior of certain parameters used by
		Levenshtein.  The first relevant cases are investigated in detail and
		new bounds are presented. In particular, we derive generalizations
		and $q$-ary analogs of a MacEliece bound. We provide evidence
		that our approach is as good as the complete linear programming and
		discuss how faster are our calculations.  Finally, we present a
		table with parameters of codes which, if exist, would attain our bounds.
	\end{abstract}

	{\bf Keywords.} Error-correcting co\-des, Levenshtein bound, bounds for
	codes.
	
	\section{Introduction}
	
	Let $Q=\{0,1,\ldots,q-1\}$ be the alphabet of $q$ symbols and $H(n,q)$
	be the set of all $q$-ary vectors $x=(x_1,x_2,\ldots,x_n)$ over
	$Q$. The Hamming distance $d(x,y)$ between points
	$x=(x_1,x_2,\ldots,x_n)$ and $y=(y_1,y_2,\ldots,y_n)$ from $H(n,q)$ is
	equal to the number of coordinates in which they differ. A non-empty
	set $C \subset H(n,q)$ is called a code. The investigation of the
	connections between the codelength $n$, cardinality $|C|$ and the
	minimum distance $d=d(C)=\min\{ d(x,y), \ x \neq y \in C\}$ is of
	great importance in Coding Theory.
	
	The spaces $H(n,q)$ are sometimes considered as {\em polynomial metric}
	spaces
	(cf. \cite{DL,Lev92,Lev}), where using ''inner''
	products $\langle x,y \rangle := 1-\frac{2d(x,y)}{n}$ instead of
	distances is very convenient. We define $T_n=\{t_0,t_1,\ldots,t_n\}$, where
	$t_i :=-1+\frac{2i}{n}$, $i=0,1,\ldots,n$, as the set of all possible inner
	products.
	
	Let $s \in T_n$ and
	\[
	A_q(n,s):=\max\{|C| \colon C \subset H(n,q), s(C) \leq s\},
	\]
	where $s(C)=\max \{\langle x,y \rangle, \ x \neq y \in C\} $, be the
	maximal possible cardinality of a code in $H(n,q)$ of prescribed
	maximal inner product $s$. In Coding Theory this quantity is usually
	denoted by $A_q(n,d)$, where $s=1-\frac{2d}{n}$ ($=t_{n-d}$) and $d$ is
	the minimum distance of $C$ (so we have replaced the condition $d(x,y)
	\geq d$ by $\langle x,y \rangle \leq s$).
	
	Levenshtein (cf. \cite{Lev92,Lev95,Lev}, see also \cite{DL}) developed
	theory and proved universal upper bounds for $A_q(n,s)$. In this paper
	we describe refinements of the Levenshtein bound that can be applied
	for obtaining better bounds in the majority of the cases. Our
	refinements have two major advantages -- they are easy to derive
	and allow analytic investigation to certain extent.
	
	Improvements of the third Levenshtein bound in the binary case $q=2$
	were obtained by Tiet\"av\"ainen \cite{T1980} and Krasikov-Litsyn
	\cite{KL1997}, who developed bounds for $d=n-\frac{n+j}{2}$, where
	$0<j<\sqrt[3]{n}$.  Earlier, in 1973, linear programming bounds were
	obtained by McEliece (unpublished, see \cite[Chapter 17, Theorem
	38]{MWS}) who proved the asymptotic bound $A_2(n,s) \lesssim
	(n-j)(j+2)$, where $2d+j=n$ is as above, $s=1-\frac{2d}{n}$, and
	$j=o(\sqrt{n})$. The McEliece bound was improved in \cite{T1980} and
	\cite{KL1997} for $j=o(n^{1/3})$.
	
	On the other hand, binary codes of length $n=(2^{2m}+1)(2^m-1)$, size
	$M=2^{4m}$ and minimum distance $d=2^{2m-1}(2^m-1)$ (so $j=2^m-1$)
	were constructed by Sidel�nikov \cite{Sid}. This shows that the
	McEliece bound is of the correct order of magnitude. We also note that
	the maximum possible size of a code for $j=1$ and $n \equiv 1
	\pmod{4}$ is still unknown.
	
	Much less is known in the $q$-ary case, where analogs of the
	Tiet\"av\"ainen bound were obtained in \cite{Per}.  Our results give
	a generalization of the McEliece bound -- first as we
	prove it for every $n \geq q \geq 2$ and second, as we obtain its
	$q$-ary asymptotic analog.
	
	This paper is organized as follows. In Section 2 we explain the
	general linear programming bound, the Levenshtein bound and related
	parameters. Section 3 is devoted to general description of our
	refinements and discussion on its limits. We develop the first
	relevant case giving a rigorous proof for the refinement of the third
	Levenshtein bound in Section 4, where we also investigate the
	asymptotics of the new bounds. We also provide evidence that our
    bounds for large enough $\frac{d}{n}=\frac{1-s}{2}$ are as good as
    the complete linear programming despite being considerably simpler.
    Asymptotic bounds from the refinement
	of the fourth Levenshtein bound are presented in Section 5. We also
	compile a table of feasible parameters for good codes attaining
	our bounds.
	
	\section{Preliminaries}
	
	\subsection{Krawtchouk polynomials and the linear programming framework}

	For fixed $n$ and $q$, the (normalized) Krawtchouk polynomials are
	defined by
	%\begin{equation}\label{Kraw}
	\[
	Q_i^{(n,q)}(t) :=\frac{1}{r_i} K_i^{(n,q)}(d),
	\]
	%\end{equation}
	where $d=\frac{n(1-t)}{2}$, $r_i=(q-1)^i {n \choose i}$, and
	$K_i^{(n,q)}(d)=\sum_{j=0}^{i}
	(-1)^j(q-1)^{i-j}\binom{d}{j}\binom{n-d}{i-j}$
	are the (usual) Krawtchouk polynomials that obey the three-term recurrent
	relation
	\[
	K_0^{(n,q)}(d)=1, \ \ K_1^{(n,q)}(d)=n(q-1)-qd,
	\]
	\[ K_{i+1}^{(n,q)}(d) =
	\frac{i+(q-1)(n-i)-qd}{i+1}K_i^{(n,q)}(d)
	-\frac{(q-1)(n-i+1)}{i+1}K_{i-1}^{(n,q)}(d),\mbox{ for } i\geq 1.
	\]
	We point out that even if the Krawtchouk polynomials  $K_{i}^{(n,q)}(d)$
    are defined for all non-negative integers $i$, the normalized polynomials
    $Q_i^{(n,q)}(t)$ are only defined for integers $i\in [0,n]$.
    If $f(t) \in \mathbb{R}[t]$ is a real polynomial of degree $m \geq 0$,
    then $f(t)$ can be uniquely expanded in terms of the Krawtchouk polynomials
    as $f(t) = \sum_{i=0}^m f_i Q_i^{(n,q)}(t)$.
	
	The next  (folklore) assertion is the main source of linear programming
	bounds (aka Delsarte bounds) for $A_q(n,s)$.
	
	\begin{theorem}
		\label{thm 1}
		Let $n \geq 2$ and $s \in [-1,1)$ be fixed and $f(t)$ be a real
		polynomial of degree $m$ such that:
		
		{\rm (A1)} $f(t) \leq 0$ for every $t \in T_n \cap [-1,s]$;
		
		{\rm (A2)} the coefficients in the Krawtchouk expansion $f(t) =
		\sum_{i=0}^{m} f_i Q_i^{(n,q)}(t)$ satisfy $f_i \geq 0$ for every $i$.
		
		Then  $A_q(n,s) \leq \frac{f(1)}{f_0}$.
	\end{theorem}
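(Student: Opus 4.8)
The plan is to bound the cardinality of an arbitrary code $C \subset H(n,q)$ satisfying $s(C) \leq s$ and then take the maximum over all such $C$. The central object is the double sum
\[
S := \sum_{x \in C} \sum_{y \in C} f(\langle x,y \rangle),
\]
which I would estimate from above using (A1) and from below using (A2). Matching the two estimates will produce the claimed inequality.

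For the upper bound I would separate the diagonal from the off-diagonal terms. Each diagonal term has $x=y$, so $d(x,x)=0$, hence $\langle x,x \rangle = 1$ and the term equals $f(1)$; there are $|C|$ of these. For an off-diagonal term, $x \neq y \in C$ forces $\langle x,y \rangle \in T_n$ (the inner product of two distinct points of $H(n,q)$ is one of the admissible values) together with $\langle x,y \rangle \leq s(C) \leq s$, so $\langle x,y \rangle \in T_n \cap [-1,s]$; by (A1) each such term is at most $0$. This yields $S \leq |C|\,f(1)$.

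For the lower bound I would substitute the Krawtchouk expansion $f(t) = \sum_{i=0}^{m} f_i Q_i^{(n,q)}(t)$ and interchange the order of summation, obtaining
\[
S = \sum_{i=0}^{m} f_i \sum_{x \in C} \sum_{y \in C} Q_i^{(n,q)}(\langle x,y \rangle).
\]
The decisive ingredient is the positive-semidefiniteness of the normalized Krawtchouk polynomials on the Hamming space: each matrix $\bigl(Q_i^{(n,q)}(\langle x,y \rangle)\bigr)_{x,y \in C}$ is positive semidefinite, so the inner sum is non-negative for every $i$. This is precisely where the association-scheme structure of $H(n,q)$ enters (the addition formula realizing $Q_i^{(n,q)}$ as a zonal function), and it is the one genuinely non-trivial input; the hypotheses (A1)--(A2) are merely the two sign conditions that allow us to exploit it. Since $Q_0^{(n,q)} \equiv 1$, the $i=0$ term equals $f_0\,|C|^2$, and by (A2) every remaining term is non-negative, giving $S \geq f_0\,|C|^2$.

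Combining the two estimates yields $f_0\,|C|^2 \leq S \leq |C|\,f(1)$, hence $f_0\,|C| \leq f(1)$ and $|C| \leq f(1)/f_0$, where the division is legitimate because the stated bound tacitly requires $f_0 > 0$ (if $f_0 = 0$ the argument gives only $0 \le f(1)$ and no restriction on $|C|$). As $C$ was an arbitrary code with $s(C) \leq s$, taking the maximum over all such $C$ delivers $A_q(n,s) \leq f(1)/f_0$. I expect the only real obstacle to be the justification of the positive-semidefiniteness of the $Q_i^{(n,q)}$; once that is in hand, the remainder is bookkeeping with the two hypotheses.
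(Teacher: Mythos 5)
Your proposal is correct and is exactly the standard (folklore) argument that the paper implicitly invokes: the two-sided estimate of $\sum_{x,y\in C} f(\langle x,y\rangle)$, with (A1) controlling the off-diagonal terms from above and (A2) combined with the positive semidefiniteness of the matrices $\bigl(Q_i^{(n,q)}(\langle x,y\rangle)\bigr)_{x,y\in C}$ giving the lower bound $f_0|C|^2$. The paper states this theorem without proof as the classical Delsarte linear programming bound, and your argument, including the correct identification of positive semidefiniteness as the one non-trivial input and the tacit requirement $f_0>0$, is precisely the intended proof.
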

	
	\subsection{The Levenshtein bound}
	
	The so-called adjacent polynomials as introduced by Levenshtein
	(cf. \cite[Section 6.2]{Lev}, see also \cite{Lev92,Lev95}) are given
	by
	\begin{equation} \label{adjacent}
	Q_i^{(1,0,n,q)}(t) = \frac{K_i^{(n-1,q)}(d-1)}{\sum_{j=0}^i {n \choose
			j} (q-1)^j}, \ \ Q_i^{(1,1,n,q)}(t) =
	\frac{K_i^{(n-2,q)}(d-1)}{\sum_{j=0}^i {n-1 \choose j} (q-1)^j},
	\end{equation}
	\[
	Q_i^{(0,1,n,q)}(t) = \frac{K_i^{(n-1,q)}(d)}{{n-1 \choose i} (q-1)^i},
	\]
	%\end{eqnarray}
	where $d=n(1-t)/2$.
	
	For $a \in \{0,1\}$ and $i \in \{ 1,2,\ldots,n-1\}$, denote by
	$t_i^{1,a}$ the greatest zero of the adjacent polynomial
	$Q_i^{(1,a,n,q)}(t)$ (see \eqref{adjacent}) and also define
	$t_0^{1,1}=-1$ as well as $t_n^{1,0}=1$. We have the interlacing properties
    $t_{k-1}^{1,1}<t_k^{1,0}<t_k^{1,1}$, see \cite[Lemmas 5.29, 5.30]{Lev}.
    For a positive integer $m=2k-1+\varepsilon$, $\varepsilon \in \{0,1\}$, let
	\[
	\mathcal{I}_m :=
	\left[ t_{k-1+\varepsilon}^{1,1-\varepsilon},t_k^{1,\varepsilon} \right).
	\]
	Then the set of well defined intervals $\left\{\mathcal{I}_m\right\}_{m=1}^{2n-1}$
    forms a partition of the interval $[-1,1)$ into non-overlapping subintervals.
	For every $s \in \mathcal{I}_m$, Levenshtein used Theorem \ref{thm 1}
	with certain polynomials of degree $m$
	\begin{equation}
	\label{lev_poly}
	f_m^{(n,s,q)}(t)=(t-s)(t+1)^\varepsilon A^2(t)
	\end{equation}
	(see \cite[Equations (5.81) and (5.82)]{Lev}), where
	$\deg(f_m^{(n,s)})=m$, to obtain (see \cite[Equations (6.45) and
	(6.46)]{Lev})
	\begin{equation}
	\label{L_bnd}
	A_q(n,s) \leq L_{m}(n,s;q) = q^{1-\varepsilon}
	\left(
	1 - \frac{Q_{k-1}^{(1,1-\varepsilon,n,q)}(s)}{Q_k^{(0,\varepsilon,n,q)}(s)}
	\right)
	\sum\limits_{j=0}^{k-1+\varepsilon} {n \choose j}(q-1)^{j}
	\end{equation}
	for every $s\in \mathcal{I}_{m}$.  The bound \eqref{L_bnd} is attained
	by many codes with good combinatorial properties but is weak in many other
	particular cases. It is also worth to mention its good asymptotic
	behavior (see \cite{BN2006}, \cite[Section 6.2]{Lev}).
	
	In \cite{BD} two of the authors obtained (for any $q$) and investigated (for $q=2$)
	necessary and sufficient conditions for global optimality of the
	Levenshtein bounds (see also \cite[Theorem 5.47]{Lev}). Here we
	discuss another possibility of improving Levenshtein bounds by taking
	into account the discrete nature of the set of inner products.
	
	\section{Our refinement -- vanishing at inner products instead of zeros of
		the Levenshtein's polynomial}
	
	The roots $-1\leq \alpha_{0} < \alpha_{1} < \cdots <
	\alpha_{k-2+\varepsilon} < \alpha_{k-1+\varepsilon} = s$ of the Levenshtein
	polynomials $f_m^{(n,s,q)}(t)$ (we recall that $m=2k-1+\varepsilon$,
	$\varepsilon \in \{0,1\}$) are exactly the roots of the equation
	\begin{equation}
	\label{equ-roots}
	(t+1)^\varepsilon [P_k(t)P_{k-1}(s)-P_k(s)P_{k-1}(t)]=0,
	\end{equation}
	where $P_i(t)=Q_i^{(1,\varepsilon,n,q)}(t) \in \mathbb{Q}[t]$. Since
	\eqref{equ-roots} is equivalent to an equation with integer coefficients,
	the double zeros $\alpha_i$, $i=\varepsilon,\ldots,k-2+\varepsilon$, will
	rarely coincide exactly with inner products from the set $T_n$. Taking this
	into account we obtain the following refinement of the Levenshtein bound.
	
	We first locate the nodes $\alpha_i$, $i = \varepsilon, \ldots,
	k-2+\varepsilon$, with respect to the elements (the inner products) of
	$T_n$.  Then, if $\alpha_i \in (t_{j-1},t_{j})$ for some integer $j\in
	[1,n]$, we replace the double zero $\alpha_i$ by two simple
    zeros\footnote{In the special case $\alpha_i=t_{j}$ for some $i =
	\varepsilon, \ldots, k-2+\varepsilon$ and $j$ there are two possible
	replacements of $\alpha_i$ -- by $t_{j-1}$ and $t_{j}$ or by $t_{j}$ and
	$t_{j+1}$. Our choice is simple -- we check both and take the better one.}
	$\gamma_{2(i-\varepsilon) +1} = t_{j-1}$ and $\gamma_{2(i-\varepsilon +1)}
	=t_{j}$. After setting $\gamma_{2k-1} = s$, we define the polynomial
	\[
	f(t)=(t+1)^{\varepsilon}\prod_{i=1}^{2k-1}(t-\gamma_i)
	\]
	of degree $m=2k-1+\varepsilon$. We observe that the values of this
	polynomial in the interval $(t_{j-1},t_{j})$ are positive and, in
	particular, $f(\alpha_i) \geq 0$, with (very rare apart from
	$\alpha_{k-1+\varepsilon} = s \in T_n$) equality case if and only if
	$\alpha_i=t_{j}$. Finally, in the case when the degree $m$ exceeds the
	codelength $n$ we reduce the polynomial $f(t)$ to its remainder from its
    division by $\displaystyle g(t)=\prod_{i=0}^{n}(t-t_i)$.
    This operation is standard when the polynomial metric space (PMS) is finite.

	This construction clearly implies that the condition (A1) is
	satisfied. Moreover, using the quadrature formula
	\begin{equation}
	\label{f0-second}
	f_0 = \frac{1}{L_{m}(n,s;q)}+\sum_{i=0}^{k-1+\varepsilon} \rho_i f(\alpha_i)
	\end{equation}
	(see \cite[Theorem 5.39]{Lev}) and the inequalities $f(\alpha_i)
	\geq 0$ we conclude that $f_0>0$ always follows.
	
	The condition (A2) for $i \geq 1$ can be easily checked numerically in every
	particular case, and it is satisfied in the great majority of the cases we
	considered.	We give a rigorous proof for the case $m=3$ below.
	
	Summarizing, whenever we have (A2), Theorem \ref{thm 1} gives upper
	bound for the corresponding $A_q(n,s)$. Clearly, this is a strict
	improvement of the Levenshtein bound \eqref{L_bnd} if and only if
	$\alpha_i \not\in T_n$ for at least one $i$.  Note that $\alpha_i \in
	T_n$ occurs very rare -- this is connected to integral zeros of
	Krawtchouk polynomials (see \cite{KL1996,Stroeker}).
	
	Our numerical results cover wide range of values of $q$, $n$ and $s$,
	as we inspect all feasible $s$ for given $q$ and $n$.  Unfortunately,
	comparisons with well established sources such as \cite{Brouwer,AVZ,Bes} can
	be made in small range, namely, for alphabet size $q=2$, 3, 4, and 5, and
	for
	lengths $n \leq 28$, $n \leq 16$, $n \leq 12$, and $n \leq 11$,
	respectively.
	In these ranges we recover the following best known upper bounds
	\[
	A_3(14,-1/7) \leq 237, \ \ A_4(11,-3/11) \leq 320,
	\ \ A_5(11,-5/11) \leq 250
	\]
	(the Levenshtein bounds are 256.5, 364, and 265, respectively).
	
	It is clear that, in every particular case, the numerics from our
	refinements can not be better than the complete (integer) linear
	programming (see, for example \cite{sage}). However, we are going to
	show strong evidence that for every fixed $m$, our method gives the same
	results as the complete linear programming gives for large enough $n$
    despite being considerably simpler for computation.
	
    The much easier computation allows us to go for large lengths. Bounds for large lengths were
    numerically investigated (for binary
    codes only) by Barg-Jaffe \cite{BJ2001}. Our computational results agree well
    with their application of the simplex method for large $\frac{d}{n}=\frac{1-s}{2}$.
    We give a short table for comparison. The bounds are computed for $\frac{1}{n}\log_2 A_2(n,s)$.

\begin{center}
		{\bf Table 1.} Bounds for binary codes, $n=1000$, $\frac{d}{n}=\frac{1-s}{2} \in [0.25,0.45]$.

\smallskip

\begin{tabular}{|c|c|c|c|c|c|}
  \hline
  % after \\: \hline or \cline{col1-col2} \cline{col3-col4} ...
  $d/n$ & 0.25 & 0.3 & 0.35 & 0.4 & 0.45 \\ \hline
  $L_{m}(n,s;2)$ & 0.387 & 0.283 & 0.191 & 0.115 & 0.505 \\
  our bound & 0.386 & 0.281 & 0.188 & 0.110 & 0.047\\
  simplex from \cite{BJ2001} & 0.380 & 0.280 & 0.188 & 0.109 & 0.047 \\
  \hline
\end{tabular}
\end{center}

\smallskip

	This comparison and our computational results for larger $q$ lead us to the conjecture that our
	method matches the best results possible by Theorem \ref{thm 1} for large enough ratio $d/n=(1-s)/2$.

\begin{conjecture}
		\label{conj-good-bound}
		For a fixed $q \geq 3$ there exist a constant $s_q$ such that whenever $s\in[-1,s_q)\cap T_n$ (that is large enough $d/n=(1-s)/2$) the
		above refinements are the best that can be obtained by Theorem \ref{thm 1}.
	\end{conjecture}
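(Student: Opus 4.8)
The plan is to establish the conjecture through linear programming duality: by Theorem~\ref{thm 1} the value $f(1)/f_0$ produced by an admissible polynomial $f$ is optimal among all polynomials precisely when complementary slackness can be met, that is, when there exists a nonzero nonnegative distance distribution $\mathbf{B}=(B_0,B_1,\ldots,B_n)$ of a (possibly fractional) configuration such that the inner products in its support lie among the zeros of $f$ in $[-1,s]$, and the Delsarte constraints are tight at every index $i$ with $f_i>0$. The refined polynomial is already tailored to this picture: after the replacement of each double node $\alpha_i$ by the two bracketing lattice points, all its zeros $\gamma_1<\cdots<\gamma_{2k-1}=s$ lie in $T_n$, so they are legitimate inner products and can serve as the support of such a witness. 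Thus the conjecture reduces to producing, for every $s\in[-1,s_q)\cap T_n$, a nonnegative primal witness supported on $\{\gamma_i\}$ whose objective value equals the refined bound.

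First I would fix the small degree $m$ corresponding to the interval $\mathcal{I}_m\ni s$; since the threshold $s_q$ pushes $s$ toward $-1$, only finitely many values of $m$ (beginning with the cases $m=3$ and $m=4$ analysed in Sections~4 and~5) are involved for each $q$. For such $m$ the refined polynomial has exactly $m$ simple zeros in $[-1,s]$, and I would set up the linear system expressing primal feasibility: the normalization $B_0=1$, nonnegativity $B_j\ge 0$, and the inequalities $\sum_j B_j Q_i^{(n,q)}(\gamma_j)\ge 0$ for $i\ge 1$, with equality forced exactly where $f_i>0$. Because the number of support points matches the number of active dual constraints, this is a square system whose unique solution can be written explicitly through the adjacent polynomials appearing in \eqref{adjacent} and \eqref{L_bnd}; verifying that its objective $\sum_j B_j$ equals the refined $f(1)/f_0$ is then a computation parallel to the quadrature identity \eqref{f0-second}.

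The main obstacle is to prove that the weights $B_j$ of this witness are nonnegative for every admissible $n$, uniformly once $s<s_q$. This is the discrete counterpart of the global-optimality criteria for the Levenshtein bound obtained in \cite{BD} and \cite[Theorem 5.47]{Lev}, and I expect the sign analysis to hinge on the same interlacing properties $t_{k-1}^{1,1}<t_k^{1,0}<t_k^{1,1}$ together with the monotonicity of ratios of adjacent polynomials. Concretely, one has to show that no degree-$(m+1)$ or degree-$(m+2)$ correction can improve the bound, which by duality is equivalent to the nonnegativity of the remaining Krawtchouk moments $\sum_j B_j Q_i^{(n,q)}(\gamma_j)$ for $i>m$; here the large-$d/n$ regime is essential, since it guarantees that the relevant Krawtchouk polynomials change sign too few times in $[-1,s]$ to support an optimal solution of higher degree. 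I would isolate $s_q$ as the largest $s$ for which these sign conditions persist, and I anticipate that the delicate point is making the estimates uniform in $n$ rather than merely asymptotic, so that the conclusion holds for all lengths and not only in the limit.
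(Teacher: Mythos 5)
You should first be clear about the status of the statement: it is a \emph{conjecture}, and the paper does not prove it. What the paper provides is (i) numerical evidence (Table 1, Figure \ref{fig1}), and (ii) for the case $m=3$, a checkable sufficient criterion for LP-optimality obtained from KKT duality: the multipliers of Lemma \ref{weights3} solving \eqref{mui_system}, shown there to be nonnegative, and Theorem \ref{test3-theorem}, which says the refined bound is optimal \emph{provided} the quantities $\lambda_i^*$ of \eqref{lambda_calc} are nonnegative for $4\le i\le n$ --- a condition the authors can only verify numerically, case by case. Your proposal follows exactly this duality route: your fractional distance distribution $B_j$ supported on the zeros $\gamma_i$ is the paper's multiplier vector $(\mu_d^*,\mu_{d_0+e-1}^*,\mu_{d_0+e}^*)$, your ``square system'' is \eqref{mui_system}, and your ``nonnegativity of the remaining Krawtchouk moments for $i>m$'' is precisely the condition $\lambda_i^*\ge 0$. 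But you then stop where a proof would have to begin: the uniform-in-$n$ nonnegativity of these quantities for all $s<s_q$ is deferred with ``I expect'' and ``I anticipate.'' That nonnegativity \emph{is} the conjecture; reducing the conjecture to an unproven claim of the same difficulty is not a proof, and neither the paper nor your sketch closes it.

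There is also a concrete structural error that would sink the plan even as a reduction. You claim that for fixed $q$ only finitely many degrees $m$ (``beginning with $m=3$ and $m=4$'') are involved. This is false in the regime the conjecture is actually about. For each fixed $m\ge 2$ the interval $\mathcal{I}_m$ has width $O(n^{-1/2})$ and collapses onto the $q$-ary Plotkin point $-(q-2)/q$ as $n\to\infty$; consequently, for any constant $s_q>-(q-2)/q$ and any fixed $s$ with $-(q-2)/q<s<s_q$, the degree $m(n,s)$ of the relevant refinement grows without bound in $n$ (roughly linearly). The paper's own evidence places $s_q$ in exactly this regime: in Table 1 ($q=2$, $n=1000$) the refinement matches the simplex bound at $d/n=0.35,0.4,0.45$, i.e.\ at $s=0.3,0.2,0.1$, all above the binary Plotkin point $s=0$. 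Conversely, if $s_q$ were at or below the Plotkin point, then for large $n$ the whole range $[-1,s_q)$ would sit inside $\mathcal{I}_1$, where the Levenshtein polynomial has no interior double zeros and the ``refinement'' is vacuous, so the conjecture would be empty of content. Hence any genuine proof must handle certificates whose support size and number of active constraints grow with $n$, with sign estimates uniform in both $n$ and $m$; nothing in your fixed-small-$m$, square-system-plus-interlacing outline addresses that, and this is the real analytic obstruction that keeps the statement a conjecture.
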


    In order to support this conjecture, in Figure \ref{fig1} we present graph for the function $s_q(n)$ defined as the range $[-1,s_q(n))\cap T_n$ where our improvement is optimal in the sence of Theorem \ref{thm 1}.

	\begin{figure}[h!]
		\centering
		\psfrag{s}     []  []  [0.6]{$s_{q}(n)$}
		\psfrag{n}     []  []  [0.6]{$n$}
		\psfrag{MyTitle}   []  []  [0.4]{}
		\psfrag{q3}    []  [r]  [0.6]{$\ \ \ \ \ q=3$}
		\psfrag{q5}    []  [r]  [0.6]{$\ \ \ \ \ q=5$}
		\psfrag{q10}   []  [r]  [0.6]{$\ \ \ \ \ q=10$}
		
		\includegraphics[width=1\textwidth]{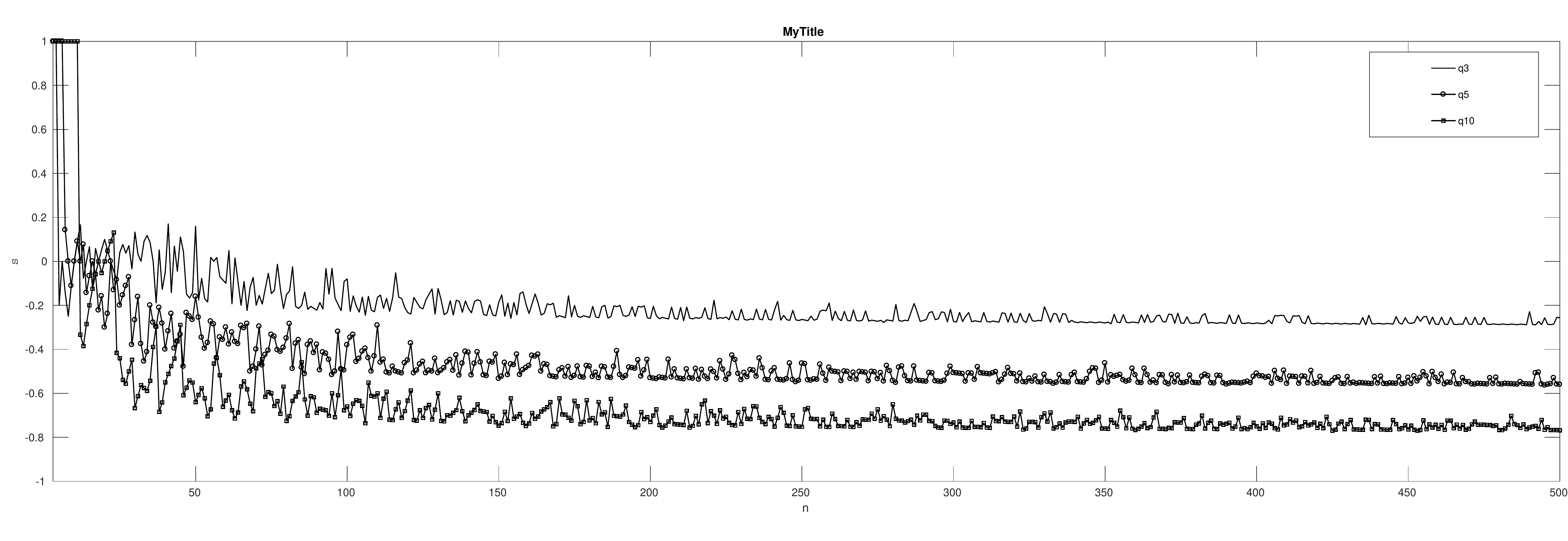}

		\caption{The function $s_{q}(n)$ for some $q$'s and codelengths $n\in
			[3,500]$.}
		\label{fig1}
		
	\end{figure}

	We note also that, similarly to \cite{BJ2001}, our computations do not support
	the conjecture of Samorodnitsky \cite{Sam} on the exponential strength of the linear programming method.

	Semidefinite programming was shown to be better than the linear
	programming in most particular cases (see, for example, \cite{GST,Sch,LPS})
	but it hardly gives bounds in analytic forms. So we see another advantage in
	giving analytic form of our bounds in the first cases (see the next
	sections).
	
	\section{The refinement of $L_3(n,s;q)$}
	
	In this section we apply our refinement in the case of the third Levenshtein
	bound. We provide proof for the feasibility of the chosen polynomial as
	well as some numerical results for the global optimality of this polynomial.
	
	\subsection{Proof of the feasibility of suggested polynomial}
	
	Let us set\footnote{This convention is natural extension of McEliece's $d=(n-j)/2$
    used for $q=2$.}
	\begin{equation}
	\label{d-asymp}
	d = n-1 - \frac{n-2+j}{q},
	\end{equation}
	where the parameter $j$ will be explained below. We proceed with the
	general case of upper-bounding $A_q(n,s)$ in the range
	\begin{equation*}
	s \in \mathcal{I}_3 = \left[ t_1^{1,1},t_2^{1,0} \right) =
	\left[
	-\frac{(q-2)(n-2)}{nq}, -\frac{(q-2)(n-2)}{nq}+\frac{S_1-q}{nq}
	\right),
	\end{equation*}
	where $S_1=\sqrt{q^2+4(q-1)(n-2)}$.  Since $d=n(1-s)/2$, we obtain the
	ranges for $d$ and $j$ to be
	\begin{equation}
	\label{D3}
	d \in
	\left(
	(n-1) - \frac{n-2}{q} - \frac{S_1-q}{2q}, (n-1) - \frac{n-2}{q} \,
	\right]
	\iff
	j \in \mathcal{J}_3 = \left[0,\frac{S_1-q}{2}\right).
	\end{equation}
	The simple form of $\mathcal{J}_3$ justifies the change of the variable.
	
	In the particular case of $m=3$ the polynomial defined in \eqref{lev_poly}
	becomes
	\begin{equation*}
	f_3^{(n,s,q)}(t)=(t-\alpha_0)^2(t-s)=(t-\alpha_0)^2(t-\alpha_1)\, ,
	\end{equation*}
	where $s = \alpha_1 = 1 - \frac{2d}{n} = -1 + \frac{2(n-2+j+q)}{nq}$ and
	$\alpha_0 = -1 + \frac{2j(n-1)}{nq(j+q-1)}$, since $\alpha_0$ and $s$
	are the roots of the equation \eqref{equ-roots} for $k=2$ and
	$\varepsilon=0$.  Let us set $d_0 = \frac{n(1-\alpha_0)}{2} =
	n-\frac{j(n-1)}{q(j+q-1)}$ and define $e$ to be the unique rational number
	in the interval $(0,1]$ such that $d_0+e$ is integer. We point out  that
	\begin{equation*}
	\alpha_0 \in
	\left(
	1 - \frac{2(d_0+e)}{n}, 1-\frac{2(d_0+e-1)}{n}\,
	\right]
	\end{equation*}
	and that $e=\frac{r}{q(j+q-1)}$, where $r$ is the positive remainder from
	the division of $j(n-1)$ by $q(j+q-1)$, i.e. $r\equiv  j(n-1)
	\pmod{q(j+q-1)}$ and $r\in \left( 0,q(j+q-1) \right]\cap \mathbb{Z}$.
	
	Now we are in a position to define our improving polynomial as
	\begin{equation}
    \label{newpoly3}
	\begin{array}{rcl}
	f(t) & = & \left(t-1+\frac{2(d_0+e)}{n}\right)
	           \left(t-1+\frac{2(d_0+e-1)}{n}\right)(t-s) \\
	     & = & \left(t+1+\frac{2e}{n}-\frac{2j(n-1)}{nq(j+q-1)}\right)
	           \left(t+1+\frac{2(e-1)}{n}-\frac{2j(n-1)}{nq(j+q-1)}\right)
	           \times \\ &   &
               \left(t+1-\frac{2(n-2+j+q)}{nq}\right) \\
	     & = & f_0+f_1Q_1^{(n,q)}(t)+f_2Q_2^{(n,q)}(t)+f_3Q_3^{(n,q)}(t),
	\end{array}	\end{equation}
	where the coefficients $f_i$, $i=3,2,1,0$, are given by
	\begin{eqnarray}
	f_3 &=& \frac{8(q-1)^3(n-2)(n-1)}{q^3n^2}>0, \nonumber \\
	f_2 &=& \frac{8(q-1)^2(n-1)A}{q^3n^2(q+j-1)}, \label{f2} \\
	f_1 &=& \frac{8(q-1)\left( (eq-B)^2+C \right)}{q^3n^2}, \label{f1} \\
	f_0 &=& \frac{8\left(a^2(2-q-j)+Da+E\right)}{q^3n^3}>0, \nonumber
	\end{eqnarray}
	for
	\begin{eqnarray*}
		A &=& -j^2+(2eq-1)j+(q-1)(2n+2eq+q-4),\\
		B &=&  \frac{1}{j+q-1}\left[j(j-2)-n(q-1)+\frac{q}{2}(3j+q-1)\right],\\
		C &=& -j^2+(-q+2)j+(3n-2)(q-1)-\frac{q^2}{4},\\
		D &=& (j+q-1)[2n(q-1)-q]+q,\\
		E &=& -n(n-1)(q-1)^2(j+q),\\
		a &=& \frac{(n-1)(q-1)(q+j)}{q+j-1}+eq.
	\end{eqnarray*}
	
	We proceed with the proof of the positivity of the coefficients $f_1$
	and $f_2$.
	
	\begin{lemma}
		\label{f2positive}
		We have $f_2>0$ for every $n \geq 2$ and $q \geq 2$.
	\end{lemma}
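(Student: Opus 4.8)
The plan is to reduce the claim to the positivity of the factor $A$, since every other factor appearing in \eqref{f2} is manifestly positive. Indeed, for $n \geq 2$ and $q \geq 2$ we have $8(q-1)^2(n-1) > 0$ and $q^3 n^2 > 0$, while $q + j - 1 \geq q - 1 \geq 1 > 0$ because $j \geq 0$ throughout $\mathcal{J}_3$. Hence $f_2$ has the same sign as $A$, and it remains to prove $A > 0$ on the relevant range $j \in \mathcal{J}_3$, $e \in (0,1]$.

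First I would exploit that $A$ is linear in $e$ with coefficient $2q(j+q-1) > 0$. Since $e > 0$, this gives $A > A\big|_{e=0} = -j^2 - j + (q-1)(2n+q-4)$, so it suffices to show the $e=0$ specialization is nonnegative. The advantage is that this reduces matters to a single quadratic in $j$ with no dependence on the auxiliary quantity $e$, keeping the computation transparent.

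The key step is to feed in the defining upper end of the interval $\mathcal{J}_3$. By \eqref{D3} we have $j < \frac{S_1 - q}{2}$, equivalently $2j + q < S_1$; as both sides are positive we may square to obtain $(2j+q)^2 < S_1^2 = q^2 + 4(q-1)(n-2)$, which simplifies to the clean inequality $j^2 + qj < (q-1)(n-2)$. Substituting $-j^2 > qj - (q-1)(n-2)$ into $A\big|_{e=0}$ and collecting terms yields
\[
A\big|_{e=0} > (q-1)\big[(2n+q-4)-(n-2)\big] + (q-1)j = (q-1)(n+q-2+j).
\]
Because $n \geq 2$, $q \geq 2$ and $j \geq 0$, every factor on the right is strictly positive, so $A\big|_{e=0} > 0$ and therefore $A > 0$, which completes the argument.

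I expect the only genuinely delicate point to be the bookkeeping in eliminating the square root $S_1$: one must square the strict inequality $2j+q < S_1$ (legitimate precisely because both sides are positive) and then verify that the quadratic term in $A\big|_{e=0}$ cancels exactly against the bound $j^2 + qj < (q-1)(n-2)$. Everything else is a routine sign check, and treating $e$ as a free parameter in $(0,1]$ rather than as its specific remainder-derived value is what makes the reduction to a bare quadratic in $j$ possible.
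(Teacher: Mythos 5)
Your proof is correct, but it establishes the key inequality $A>0$ by a different mechanism than the paper. Both arguments begin identically, reducing \eqref{f2} to the positivity of $A$ for $j\in\mathcal{J}_3$. The paper then observes that $A$ is a concave quadratic in $j$, so it suffices to check positivity at the two endpoints of the interval $\left[0,\frac{S_1-q}{2}\right]$, and it verifies both endpoint values directly while carrying $e$ along: at $j=0$ one gets $(q-1)(2n+2eq+q-4)>0$, and at $j=\frac{S_1-q}{2}$ an explicit positive expression involving $S_1$. You instead exploit linearity of $A$ in $e$ (with positive coefficient $2q(j+q-1)$) to reduce to $e=0$, and then replace the concavity-plus-endpoint argument by squaring the defining constraint $2j+q<S_1$ into the polynomial inequality $j^2+qj<(q-1)(n-2)$, whose substitution turns the quadratic $A\big|_{e=0}$ into the manifestly positive linear bound $(q-1)(n+q-2+j)$. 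Your route avoids the somewhat messier evaluation at $j=\frac{S_1-q}{2}$ and keeps all algebra free of $S_1$ after the single squaring step (legitimate, as you note, because both sides are positive); the paper's route needs no monotonicity in $e$ and disposes of the whole interval in one conceptual stroke. Both are complete, and both handle the degenerate case $n=2$ (where $\mathcal{J}_3=\emptyset$ and the claim is vacuous) automatically.
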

	\begin{proof}
		It follows from \eqref{f2} that it is enough to prove that $A>0$.
		Observing that $A$ is a concave quadratic function in $j$, according
		to (\ref{D3}) we have to check the positivity of $A$ for $j=0$ and
		$j=\frac{S_1-q}{2}$, respectively. For the former we have
		\[
		A=(q-1)(2n+2eq+q-4)\geq (q-1)(2n+q-4)>0
		\]
		and for the latter
		\[
		A=\left( eq+\frac{q-1}{2}\right) S_1+eq(q-2)+\frac{(2n+q-4)(q-1)}{2}>0,
		\]
		whenever $n\geq 2$ and $q\geq 2$.
	\end{proof}
	
	\begin{lemma}
		\label{f1positive}
		We have $f_1>0$ for every $n\geq q\geq 2$.
	\end{lemma}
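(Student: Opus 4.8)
The plan is to reduce the positivity of $f_1$ to that of the single, much simpler quantity $C$. By \eqref{f1} the prefactor $8(q-1)/(q^3n^2)$ is strictly positive for $q \ge 2$, so the sign of $f_1$ is governed by $(eq-B)^2 + C$. Since $(eq-B)^2 \ge 0$ irrespective of the (somewhat unwieldy) values of $e$ and $B$, it suffices to prove that $C > 0$ throughout the admissible range $\mathcal{J}_3$. This is the decisive simplification: it lets us discard the $e$-dependent term entirely and argue only about $C = -j^2 + (2-q)j + (3n-2)(q-1) - \tfrac{q^2}{4}$, which depends on $j$, $n$, $q$ alone.

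Next I would exploit the shape of $C$. Since $C$ is a concave quadratic in $j$ (leading coefficient $-1$), on the closed interval $[0,(S_1-q)/2]$ its minimum is attained at one of the two endpoints. Hence $C > 0$ on all of $\mathcal{J}_3$ as soon as $C > 0$ at $j=0$ and at $j = (S_1-q)/2$, exactly mirroring the endpoint argument already used in Lemma \ref{f2positive}. At the left endpoint $j=0$ one has $C = (3n-2)(q-1) - \tfrac{q^2}{4}$; bounding $(3n-2)(q-1) \ge (3q-2)(q-1)$ via $n \ge q$ reduces this to the one-variable inequality $\tfrac{11}{4}q^2 - 5q + 2 > 0$, which holds for all $q \ge 2$.

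The main computation is the right endpoint $j = (S_1-q)/2$. Here I would use the defining relation $S_1^2 = q^2 + 4(q-1)(n-2)$ to substitute $2j+q = S_1$ and $(3n-2)(q-1) = \tfrac34(S_1^2 - q^2) + 4(q-1)$; after collecting the $S_1^2$-, $S_1$-, and constant terms, the value should collapse to the clean expression
\[
C\Bigl(\tfrac{S_1-q}{2}\Bigr) = \frac{S_1^2}{2} + S_1 - \frac{3q^2}{4} + 3q - 4 .
\]
Positivity of this is where the hypothesis $n \ge q$ (rather than merely $n \ge 2$) is genuinely needed: the negative constant $-\tfrac{3q^2}{4} + 3q - 4$ can only be absorbed by $\tfrac{S_1^2}{2}$ once $S_1^2$ is large enough, and a crude bound using only $S_1 \ge q$ fails for large $q$. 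Discarding the positive term $S_1$ and inserting the lower bound $S_1^2 \ge q^2 + 4(q-1)(q-2) = 5q^2 - 12q + 8$ (valid precisely when $n \ge q$) reduces the claim to $\tfrac{7}{4}q^2 - 3q = \tfrac14 q(7q-12) > 0$, which holds for every $q \ge 2$.

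I expect this last step -- the algebraic simplification at the right endpoint together with the correct use of $n \ge q$ to let the quadratic-in-$S_1$ term dominate the $-\tfrac{3q^2}{4}$ contribution -- to be the only real obstacle. Everything else is bookkeeping once the reduction to $C > 0$ and the concavity argument are in place.
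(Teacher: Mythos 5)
Your proof is correct and takes essentially the same route as the paper's: reduce via \eqref{f1} to showing $C>0$, exploit concavity of $C$ in $j$ to check only the endpoints of $\mathcal{J}_3$, and handle each endpoint using $n\geq q$. Your right-endpoint value $\tfrac{S_1^2}{2}+S_1-\tfrac{3q^2}{4}+3q-4$ is just the paper's expression $S_1+2n(q-1)-q-\tfrac{q^2}{4}$ rewritten via $S_1^2=q^2+4(q-1)(n-2)$, and both arguments land on the same final bound $\tfrac{q}{4}(7q-12)>0$.
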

	
	\begin{proof}
		According to \eqref{f1} it is sufficient to show that $C > 0$. As in
		the proof
		of Lemma \ref{f2positive}, we observe that $C$ is a concave quadratic
		function
		of $j$ so we check its values at the limits of the interval
		$\mathcal{J}_3$.
		We
		obtain those to be
		\[
		(3n-2)(q-1)-\frac{q^2}{4} \geq (3q-2)(q-1)-\frac{q^2}{4}
		= 2 + \frac{q}{4}(11q-20) > 0
		\]
		and
		\[
		S_1+2n(q-1)-q-q^2/4 \geq S_1+\frac{q}{4}(7q-12) > 0,
		\]
		whenever $n\geq q\geq 2$.
	\end{proof}
	
	\begin{remark}
		Our numerics suggest that we might always have $f_2>f_1>f_0$.  Since
		$f_0>0$ follows by the formula \eqref{f0-second}, another proof of the
		positivity of $f_1$ and $f_2$ could probably be done along these
		lines.
	\end{remark}
	
	\begin{theorem}
		\label{refin3}
		We have
		\begin{equation}
		\label{new-bound-3}
		A_q(n,s) \leq \frac{a(a+q)dq}{a^2(2-q-j)+Da+E},
		\end{equation}
		where the parameters are determined as above.
	\end{theorem}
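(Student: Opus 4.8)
The plan is to verify that the degree-three polynomial $f(t)$ constructed in \eqref{newpoly3} meets both hypotheses of Theorem \ref{thm 1} and then to evaluate the resulting bound $f(1)/f_0$ in closed form. Condition (A1) is already supplied by the general construction of Section 3: replacing the double node $\alpha_0$ by the two consecutive inner products $1-\frac{2(d_0+e)}{n}$ and $1-\frac{2(d_0+e-1)}{n}$ that bracket it, together with the node $s=\alpha_1\in T_n$, forces $f(t)\le 0$ on $T_n\cap[-1,s]$ while $f(\alpha_i)\ge 0$. For condition (A2) I would collect the sign information on the Krawtchouk coefficients: $f_3>0$ is immediate from its displayed formula, $f_2>0$ is Lemma \ref{f2positive}, $f_1>0$ is Lemma \ref{f1positive}, and $f_0>0$ follows from the quadrature identity \eqref{f0-second} combined with $f(\alpha_i)\ge 0$. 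With all four coefficients nonnegative and $f_0>0$, Theorem \ref{thm 1} yields $A_q(n,s)\le f(1)/f_0$.

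It then remains to reduce $f(1)/f_0$ to \eqref{new-bound-3}. I would evaluate $f(1)$ directly from the factored form, using $1-s=\frac{2d}{n}$ for the last factor, so that the other two factors at $t=1$ are $\frac{2(d_0+e)}{n}$ and $\frac{2(d_0+e-1)}{n}$. The key algebraic step is the identity
\[
d_0+e=\frac{a}{q}+1,
\]
which one checks from $d_0=n-\frac{j(n-1)}{q(j+q-1)}$ and $a=\frac{(n-1)(q-1)(q+j)}{q+j-1}+eq$: after clearing the common denominator $q(j+q-1)$, the numerator of $d_0+e-\frac aq-1$ collapses to $j+(q-1)(q+j)-q(q+j-1)=0$. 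Hence the two factors simplify to $\frac{2(a+q)}{nq}$ and $\frac{2a}{nq}$, giving
\[
f(1)=\frac{2(a+q)}{nq}\cdot\frac{2a}{nq}\cdot\frac{2d}{n}=\frac{8\,a(a+q)d}{q^{2}n^{3}}.
\]

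Finally, dividing by the displayed value $f_0=\frac{8\bigl(a^{2}(2-q-j)+Da+E\bigr)}{q^{3}n^{3}}$ cancels the common factor $8/n^{3}$ and leaves a single surviving factor of $q$, producing exactly $\dfrac{a(a+q)dq}{a^{2}(2-q-j)+Da+E}$, which is \eqref{new-bound-3}. The only genuinely nonroutine ingredient is the identity $d_0+e=\frac aq+1$; once it is established, the evaluation of $f(1)$ is forced and everything else is cancellation. I expect the main source of care to be bookkeeping of the fractional shift $e=\frac{r}{q(j+q-1)}$ (the quantity making $d_0+e$ an integer): since $e$ enters $a$ linearly through the term $eq$, it must be tracked consistently and not dropped during the simplification, as it is precisely the mechanism by which the discrete nodes differ from Levenshtein's double zero $\alpha_0$.
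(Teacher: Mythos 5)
Your proposal is correct and follows essentially the same route as the paper's proof: (A1) from the construction, (A2) from Lemmas \ref{f2positive} and \ref{f1positive} together with the obvious $f_3>0$ and $f_0>0$ via \eqref{f0-second}, and then the bound $f(1)/f_0$ from Theorem \ref{thm 1}. The only difference is that the paper compresses the last step into ``simplifications give the desired bound,'' while you carry out the algebra explicitly; your key identity $d_0+e=\frac{a}{q}+1$ and the resulting evaluation $f(1)=\frac{8a(a+q)d}{q^{2}n^{3}}$ both check out and yield \eqref{new-bound-3} exactly.
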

	
	\begin{proof}
		The condition (A1) is obviously satisfied by our improving
		polynomial. The condition (A2) is satisfied as well -- Lemmas
		\ref{f2positive} and \ref{f1positive} give $f_2 \geq 0$ and $f_1 \geq
		0$, respectively, $f_3>0$ is obvious, and $f_0>0$ follows, as
		mentioned above, from \eqref{f0-second}. Therefore $A_q(n,s) \leq
		\frac{f(1)}{f_0}$ and simplifications give the desired bound.
	\end{proof}

	\begin{example}
		\label{e1-320}
		For $q=4$, $n=11$ and $s=-3/11$ (this $s$ corresponds to minimum
		distance $d=n(1-s)/2=7$) we are in the range of the third Levenshtein
		bound $L_3(11,s;4)$. Since $\alpha_0=-\frac{17}{22} \in
		\left(-\frac{9}{11},-\frac{7}{11}\right)$ and
		$\alpha_1=s=-\frac{3}{11}$, we have our improving polynomial as
		follows:
		\begin{eqnarray*}
			f(t) &=&			
			\left(t+\frac{9}{11}\right)\left(t+\frac{7}{11}\right)
			\left(t+\frac{3}{11}\right)
			\\
			&=& \frac{63}{5324} Q_0^{(11,4)}(t)+\frac{117}{484}
			Q_1^{(11,4)}(t)+\frac{45}{44} Q_2^{(11,4)}(t)+\frac{1215}{484}
			Q_3^{(11,4)}(t)
		\end{eqnarray*}
		and $A_4(11,-3/11) \leq \frac{f(1)}{f_0}=320$ (here
		$L_3(11,-3/11;4)=364$).  The best known lower bound in this case is $A_4
		(11,-3/11) \geq 128$ (see \cite{Brouwer}).  Further analysis via the
		distance distributions of a putative quaternary
		$(11,320,-3/11)$ code $C$ does not give a contradiction. Indeed, all
		possible inner products of $C$ are $-\frac{3}{11}$, $-\frac{7}{11}$
		and $-\frac{9}{11}$ and such a code must be distance regular,
		i.e. every point of $C$ has the same distance distribution, which
		turns out to be integral.
	\end{example}
	
	We now calculate the asymptotic form of the bound from Theorem
	\ref{refin3}.
	
	\begin{corollary}
		\label{McE-q}
		Let $j=cn^\alpha\in \mathcal{J}_3$ for some positive constant $c$ and
		some $\alpha\in [0,1/2]$.  The behavior of the upper bound
		given by (\ref{new-bound-3}) as $n \to \infty$ is as follows
		\begin{equation}
		\label{asy-3-1}
		A_q(n,s) \leq [(q-1)n-(j+q-2)](j+q)+j(j+q-1)^2 + o(1),\  \alpha \in
		[0,1/5),
		\end{equation}
		\begin{equation}
		\label{asy-3-2}
		A_q(n,s) \leq (q-1)(q+j)n+j^3+\frac{c^5n^{5\alpha-1}}{q-1}+o(n),\
		\alpha \in
		[1/5,1/2),
		\end{equation}
		and
		\begin{equation}
		\label{asy-3-3}
		A_q(n,s) \leq \frac{c(q-1)^2}{q-1-c^2}n^{3/2}+
		\frac{(q-1)(c^4-(q-1)(3c^2-q^2+q))}{(q-1-c^2)^2}n +o(n),\  \alpha= 1/2.
		\end{equation}
	\end{corollary}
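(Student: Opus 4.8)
The plan is to extract the asymptotic behavior of the closed-form bound
\[
A_q(n,s) \leq \frac{a(a+q)dq}{a^2(2-q-j)+Da+E}
\]
from Theorem \ref{refin3} by substituting the definitions of $a$, $d$, $D$, $E$, and $j = cn^\alpha$, and then expanding in powers of $n$ for each regime of $\alpha$. First I would record explicit large-$n$ expansions of the building blocks. The quantity $d = n-1-\frac{n-2+j}{q}$ is linear in $n$ (with a $j$-dependent correction), $a = \frac{(n-1)(q-1)(q+j)}{q+j-1} + eq$ is order $n$, and the rational correction $e \in (0,1]$ contributes only $O(1)$. The denominator is a quadratic in $a$ with coefficients $(2-q-j)$, $D = (j+q-1)[2n(q-1)-q]+q$, and $E = -n(n-1)(q-1)^2(j+q)$; these are polynomially bounded in $n$ and $j$. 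Since $j$ itself grows like $n^\alpha$, the competition between the $a^2 j$ term (through $a^2 \sim n^2$ times $j \sim n^\alpha$, giving $n^{2+\alpha}$) and the $E$ term (order $n^{2+\alpha}$ as well) is what governs which asymptotic formula appears; the threshold exponents $1/5$ and $1/2$ should emerge from balancing subleading contributions after the leading cancellation.

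My strategy would be to perform the cancellation carefully. The leading-order terms of $a^2(2-q-j)$, $Da$, and $E$ partly cancel, so I would write $a = \frac{(q-1)(q+j)}{q+j-1}n + O(1)$ and expand the denominator as a Laurent-type series in $n$, tracking which powers of $n^\alpha$ survive. The numerator $a(a+q)dq$ is manifestly order $n^3$ (times $j$-dependent factors), so the final bound's order is set by how many orders the denominator drops below $n^3$. For small $\alpha$ the denominator stays order $n$ and we get a bound of order $n$; as $\alpha$ increases past $1/5$ the $j^3$-type and $c^5 n^{5\alpha-1}$ corrections become visible; and at $\alpha = 1/2$ the factor $q-1-c^2$ appears in the denominator, which is exactly the condition for $j = \frac{S_1 - q}{2} \approx \sqrt{(q-1)n}$ to remain inside $\mathcal{J}_3$ (so $c^2 < q-1$ is forced, keeping the denominator positive). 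I would treat the three cases $\alpha \in [0,1/5)$, $\alpha \in [1/5,1/2)$, and $\alpha = 1/2$ as separate expansions, matching each to the stated right-hand sides \eqref{asy-3-1}, \eqref{asy-3-2}, and \eqref{asy-3-3}.

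Concretely, I would substitute $j = cn^\alpha$ and reduce everything to a single rational function of $n$ (with $c$, $q$, $\alpha$ as parameters), then compute its asymptotic expansion to the order needed to capture the stated error terms ($o(1)$, $o(n)$, and $o(n)$ respectively). The correction $e$, being $O(1)$ and bounded, can be absorbed into lower-order terms in every regime, so I would verify at the outset that $eq$ never affects the claimed leading and secondary terms; this is a point worth stating explicitly since $e$ is defined only implicitly through a remainder. The threshold $\alpha = 1/5$ arises because the $c^5 n^{5\alpha - 1}$ term crosses from $o(1)$ to a genuine $n$-order contribution exactly when $5\alpha - 1$ passes through the relevant comparison scale.

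The main obstacle I anticipate is the bookkeeping in the denominator: because the leading terms cancel, one must expand $a$, $D$, and $E$ to sufficiently high order in both $n$ and $j$ and control the interaction between the $1/(q+j-1)$ factors in $a$ and the polynomial $j$-dependence elsewhere. Getting the precise coefficients in \eqref{asy-3-2} and \eqref{asy-3-3} — particularly the numerator $(q-1)(c^4 - (q-1)(3c^2 - q^2 + q))$ and the $\frac{c^5 n^{5\alpha-1}}{q-1}$ correction — requires tracking second-order terms through the cancellation without error, which is where sign mistakes and dropped subleading terms are most likely. The positivity $q - 1 - c^2 > 0$ must also be invoked to guarantee the denominator does not vanish at $\alpha = 1/2$; this follows from the constraint $j \in \mathcal{J}_3 = [0, \frac{S_1-q}{2})$ combined with $S_1 = \sqrt{q^2 + 4(q-1)(n-2)} \sim 2\sqrt{(q-1)n}$, so I would make this consistency check before finalizing the third case.
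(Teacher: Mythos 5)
Your proposal follows essentially the same route as the paper's proof: substitute the parameters and $j=cn^\alpha$ into \eqref{new-bound-3}, exploit the cancellation of the leading terms in the denominator, and expand separately in the three regimes, finishing the case $\alpha=1/2$ by polynomial division in the variable $x=\sqrt{n}$; the paper merely organizes this same computation through the exact rewriting \eqref{new-bound-3j}, which isolates the McEliece-type main term so that each case reduces to a single comparison of orders ($n^{1+6\alpha}$ versus $n^{2+\alpha}$). One cosmetic slip in your plan: after the cancellation the denominator $a^2(2-q-j)+Da+E$ is of order $n^{2-\alpha}$ rather than order $n$, against a numerator of order $n^3$, which is what produces a bound of order $n^{1+\alpha}$ --- but since you defer everything to the explicit expansion, this does not affect the viability of the argument.
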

	
	\begin{proof}
		%We plug $a \sim d_3-1/2$ in \eqref{new-bound-3}.
		The upper bound in \eqref{new-bound-3} can be re-written as
		\begin{equation}
		\begin{array}{lcl}
		A_q(n,s) & \leq & [(q-1)n-(j+q-2)](j+q)+j(j+q-1)^2 \\
		&      &   +
		\displaystyle \frac{[(q-1)n+(eq+1)(eq+1-q)]j^6 +P_{q,e,n}(j)}
		{(q-1)^2(j+q)n^2+Q_{q,e,n}(j)},
		\end{array}
		\label{new-bound-3j}
		\end{equation}
		where $P_{q,e,n}(j)$ and $Q_{q,e,n}(j)$ are polynomials in $j$
		of degrees $5$ and $3$, respectively, and with coefficients
		that are linear in $n$. We notice that for the fraction in
		\eqref{new-bound-3j}, the nominator is of order
		$\Theta(n^{1+6\alpha})$ and that the denominator has the order
		$\Theta(n^{2+\alpha})$ for all $\alpha\in[0,1/2]$. This gives the
		result in \eqref{asy-3-1} since $1+6\alpha<2+\alpha$ when $\alpha \in
		[0,1/5)$.

        To obtain \eqref{asy-3-2}, we observe that for large $n$ the
		nominator behaves like $(q-1)nj^6 + \Theta(n^{1+5\alpha})$. For all
		$\alpha\in[1/5,1/2)$ we have $1+5\alpha-(2+\alpha)<1$ and also
		$i\alpha < 1$, for $i=0,1,2$. Now \eqref{asy-3-2} follows by
		ignoring the terms of orders in $n$ that are less than
		$1$.

        Finally, by substituting $j=c\sqrt{n}$ in \eqref{new-bound-3j}
		and performing polynomial division for the polynomials in the
		variable $x=\sqrt{n}$ we arrive at \eqref{asy-3-3}.
	\end{proof}
	
	%%%  \begin{remark}
	%%%  The bound \eqref{asy-3-1} can be further developed to include lower
	%%%terms.
	%For example, one can obtain
	%%%  $A_q(n,s) \lesssim ((q-1)n-j-q+2)(j+q)+o(n^{1+\alpha})$, where
	%$j=\Theta(n^\alpha)$ and $\alpha \in [0,1/2)$.
	%%%  \end{remark}
	%%%
	%%%  \[ A_q(n,s) \leq
	%%%((q-1)n-(j+q-2))(j+q)+j(j+q-1)^2+\frac{j^5}{n(q-1)-j^2}
	%+\frac{n((2q-3)c^2-(q-1)(3q-5))c^4}{(c^2-q+1)^2}+ o(n) \]
	%%%

	\subsection{On Conjecture \ref{conj-good-bound} for $m=3$}
	
	We reformulate the linear programming bound back to its
	classical form (see \cite[Sections 3.2 and 3.3]{Del}, \cite[Section
	3B]{DL}, \cite[Corollary 2.7]{Lev95}).
	
	\begin{theorem}
		\label{Th_LPbound_orig}
		Let the polynomial $g(z)=\sum_{i=0}^{n} g_i K_i^{(n,q)}(z)$ satisfies
		the conditions
		\[ g_0>0, \ g_i \geq 0, \ i = 1,2,\ldots,n; \]
		\[ g(0)>0, \ g(i) \leq 0, \ i = d,d+1,\ldots,n. \]
		Then $A_q(n,s) \leq g(0)/g_0$, where $s=1-2d/n$.
	\end{theorem}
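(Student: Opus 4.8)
The plan is to recognize Theorem~\ref{Th_LPbound_orig} as nothing more than Theorem~\ref{thm 1} rewritten in the distance variable, and to exhibit the explicit affine change of variable that carries one set of hypotheses onto the other. Concretely, given $g$ as in the statement I would set $z=\frac{n(1-t)}{2}$ (equivalently $t=1-\frac{2z}{n}$) and define $f(t):=g\!\left(\frac{n(1-t)}{2}\right)$. Since this substitution is an affine bijection of the line, $f$ is a polynomial in $t$ of degree $\deg g\le n$, and evaluating at $z=0$ gives $f(1)=g(0)$. Because the sum in the definition of $g$ runs only to $n$ and the normalized polynomials $Q_i^{(n,q)}$ are defined precisely for $i\in[0,n]$, the two expansions live in matching spaces.

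Next I would translate the Krawtchouk expansion. By the normalization $Q_i^{(n,q)}(t)=\frac{1}{r_i}K_i^{(n,q)}(d)$ with $d=\frac{n(1-t)}{2}$ and $r_i=(q-1)^i\binom{n}{i}>0$, we have $K_i^{(n,q)}\!\left(\frac{n(1-t)}{2}\right)=r_i\,Q_i^{(n,q)}(t)$. Substituting into $g(z)=\sum_{i=0}^{n}g_i K_i^{(n,q)}(z)$ yields $f(t)=\sum_{i=0}^{n}(r_i g_i)\,Q_i^{(n,q)}(t)$, so that the coefficient of $Q_i^{(n,q)}$ is $f_i=r_i g_i$. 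As $r_0=1$ and $r_i>0$ for every $i$, the hypotheses $g_0>0$ and $g_i\ge 0$ are equivalent to $f_0=g_0>0$ and $f_i\ge 0$; this is exactly condition (A2).

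It remains to match the sign conditions, and this is the one step requiring care. The inner products in $T_n\cap[-1,s]$ are exactly the $t_k=-1+\frac{2k}{n}$ with $t_k\le s=1-\frac{2d}{n}=t_{n-d}$, i.e.\ $k=0,1,\ldots,n-d$. Under the substitution each such $t_k$ corresponds to the distance $z=\frac{n(1-t_k)}{2}=n-k$, so as $k$ runs from $0$ to $n-d$ the value $z=n-k$ runs over exactly the integers $d,d+1,\ldots,n$. Hence $f(t_k)=g(n-k)$, and the hypothesis $g(i)\le 0$ for $i=d,\ldots,n$ is identical to condition (A1): $f(t)\le 0$ for all $t\in T_n\cap[-1,s]$.

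With (A1) and (A2) verified, Theorem~\ref{thm 1} applies and gives $A_q(n,s)\le \frac{f(1)}{f_0}=\frac{g(0)}{g_0}$, which is the asserted bound. I expect no genuine obstacle beyond the bookkeeping of the index correspondence $k\leftrightarrow z=n-k$ and the positivity of the scaling factors $r_i$; the role of the remaining hypothesis $g(0)>0$ is merely to guarantee that the resulting bound is a positive real number.
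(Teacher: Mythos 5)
Your proposal is correct, and it follows exactly the route the paper intends: the paper states this theorem as a ``reformulation'' of the linear programming bound of Theorem~\ref{thm 1} (citing the classical sources without proof), and your change of variables $t=1-\frac{2z}{n}$, together with the observation that $f_i=r_ig_i$ with $r_i>0$ and that $T_n\cap[-1,s]$ corresponds bijectively to the integer distances $d,d+1,\ldots,n$, is precisely the verification that reformulation requires. As a minor aside, the hypothesis $g(0)>0$ is in fact automatic from $g_0>0$, $g_i\ge 0$ and $K_i^{(n,q)}(0)=r_i>0$, so your reading of its role as mere positivity of the bound is harmless.
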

	
	In the light on Theorem \ref{Th_LPbound_orig} the best upper bound on
	the quantity $A_q(n,s)$ is obtained by the polynomial
	$g^*(z)=1+\sum_{i=1}^{n} \frac{x_i^*}{r_i} K_i^{(n,q)}(z) = 1 +
	\sum_{i=1}^{n} \frac{K_i^{(n,q)}(z)}{(q-1)^i{n \choose i}}x_i^*$ for
	which the coefficients $\overline{x}^*=(x_1^*,x_2^*,\cdots,x_n^*)$
	constitute a solution to the linear optimization problem
	
	\begin{equation}
	\label{LP_problem}
	\begin{array}{lrll}
	\text{minimize}
	\displaystyle & x_1+x_2+ \cdots + x_n & & \\
	\text{subject to} &
	\displaystyle
	\sum_{l=1}^{n} \frac{K_l^{(n,q)}(i)}{r_l} \!\!\! & x_{l} \leq -1,
	& i = d, (d+1),\dots , n\\
	& & x_{l} \geq 0,  & l=1,2,\dots ,n.
	\end{array}
	\end{equation}
	
	Applying the KKT optimality conditions (see for example \cite[Section
	5.5]{BV04}) we can conclude that necessary and sufficient condition
	for $\overline{x}^*$ to be optimal is the existence of numbers
	$\lambda_l,\, l=1,2,\dots,n$, and $\mu_i,\, i=d,d+1,\dots,n$, such
	that
	\begin{equation}
	\label{KKTcond}
	\begin{array}{rcll}
	\lambda_l & = &
	\displaystyle
	1 + \sum_{i=d}^{n} \mu_i\frac{K_l^{(n,q)}(i)}{r_l} \geq 0,
	&  l=1,2,\dots,n,\\
	\lambda_lx_l^* & = & 0, & l=1,2,\dots,n, \\
	\mu_ig^*(i)      & = & 0, & i=d,(d+1),\dots,n, \\
	\mu_i          & \geq & 0, & i=d,(d+1),\dots,n.
	\end{array}
	\end{equation}
	
	Equation \eqref{KKTcond} turns out to be a very powerful tool for
	checking the global optimality of a given polynomial. In particular,
	if we have a polynomial $f(t)$ of degree $m$ that satisfies
	conditions (A1) and (A2) of Theorem \ref{thm 1} and has $m$
	different roots in the interval $[-1,s]$, then we can exactly
	determine the numbers $\mu_i,\,i=d,d+1,\dots,n$, if the Krawtchouk
	expansion of $f(t)$ in (A2) has strictly positive coefficients. The
	polynomial $f(t)$ would then be globally optimal if and only if all the
	lambdas, $\lambda_l,\, l=1,2,\dots,n$, calculated as in \eqref{KKTcond} are
	non-negative. Our approach for improving the Levenshtein bound very often
	results in such polynomials.
	
	Let us now consider the polynomial $f(t)$ as defined by \eqref{newpoly3}
	and let us set $g(z)=\frac{1}{f_0}f\left(1-\frac{2z}{n}\right)$. We can
	easily verify that
	\begin{equation}
	\label{g3_poly}	
	g(z) = 1 + \sum_{i=1}^{3} \frac{f_i}{f_0} \frac{K_i^{(n,q)}(z)}{r_i}.
	\end{equation}
	We now determine the Lagrange multipliers $\lambda_i^*$ and $\mu_i^*$  for
	the
	polynomial defined in \eqref{g3_poly}. It has already been shown that
	$f_i>0,\,
	i=0,1,2,3$, which according to \eqref{KKTcond} means that $\lambda_i^*=0,\,
	i=1,2,3$. Since $g(i)=0$
	only for $i\in \{d,d_0+e-1,d_0+e\}$ we have $\mu_i^*=0$ for all $i\in
	\{d+1,d+2,\dots,n\}\setminus\{d_0+e-1,d_0+e\}$. The remaining
	three $\mu_i^*$'s can be obtained from the system of linear equations
	\begin{equation}
	\label{mui_system}
	\mu_d^*\frac{K_l^{(n,q)}(d)}{r_l} +
	\mu_{d_0+e-1}^*\frac{K_l^{(n,q)}(d_0+e-1)}{r_l} +
	\mu_{d_0+e}^*\frac{K_l^{(n,q)}(d_0+e)}{r_l}  = -1,\,  l=1,2,3.
	\end{equation}
	The system \eqref{mui_system} has an unique solution $(\mu_d^*,
	\mu_{d_0+e-1}^*,\mu_{d_0+e}^*)$ with help of which we can calculate the
	remaining $\lambda_i^*$, for $i=4,5,\dots,n$, according to
	\begin{equation}
	\label{lambda_calc}
	\lambda_i^*= 1+\mu_d^*\frac{K_i^{(n,q)}(d)}{r_i} +
	\mu_{d_0+e-1}^*\frac{K_i^{(n,q)}(d_0+e-1)}{r_i} +
	\mu_{d_0+e}^*\frac{K_i^{(n,q)}(d_0+e)}{r_i}.
	\end{equation}
	The first step towards the calculation of the lambdas is the following
	statement.
	\begin{lemma}
		\label{weights3}
		The  weights  $(\mu_d^*, \mu_{d_0+e-1}^*,\mu_{d_0+e}^*)$ that solve
		system
		\eqref{mui_system} can be calculated as
		\begin{eqnarray*}
			\mu_d^* &=& \frac{n(q-1)CD[C+q(j+q-1)]}{AB[B+q(j+q-1)]},  \\
			\mu_{d_0+e-1}^* & = &\frac{en(q-1)E[C+q(j+q-1)]}{AB},\\
			\mu_{d_0+e}^* & = & \frac{(1-e)n(q-1)CE}{A[B+q(j+q-1)]},
		\end{eqnarray*}
		where
		\begin{eqnarray*}
			%	A &=& (q-1)^2(j+q)n^2-(q-1)[j(j+q-1)^2+(1-2e)q(j+q-1)-2j]n\\
			%	   & & +(j+q-2)[eq(j+q-1)+j][(1-e)q(j+q-1)-j], \\
			A &=& -(q+j-2)[eq(j+q-1)]^2 + (n-1)(q-1)(j+q)[n(q-1)-j(q+j-2)]\\
			& & + eq(j+q-1)[(q^2+jq-q-2j)(q+j-2)+2n(q-1)],\\
			B &=& (n-2+j)(q-1)+j(j-1)+eq(j+q-1),\\
			C &=& (n-1)(q-1)(j+q)+eq(j+q-1), \\
			%	D &=& (q-1)^2n^2+(q-1)[(2eq+j-3)(j+q-1)+2j]n\\
			%	   & & +(e^2q^2-eq^2-q+1)j^2+(q-1)(2e^2q^2-2eq^2-2eq-q+2)j\\
			%	   & & +(q-1)^2(e^2q^2-eq^2-2eq+2).
			D &=&\left			
			[(n-1)(q-1)+(2e-1)(j+q-1)\frac{q}{2}\right]^2 +
			(j+q-1)^2\left[(n-1)(q-1)-\frac{q^2}{4}\right],\\
			E &= & (j+q-1)^3[(n-1)(q-2)+n-j].\\
		\end{eqnarray*}
	\end{lemma}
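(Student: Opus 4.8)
The plan is to treat Lemma \ref{weights3} as a pure linear-algebra statement and solve the $3\times 3$ system \eqref{mui_system} by Cramer's rule, using one structural identity to turn every determinant that occurs into a determinant of Krawtchouk evaluations. First I would clear the common factor $1/r_l$ from the $l$-th equation, rewriting \eqref{mui_system} as
\[
K_l^{(n,q)}(d)\,\mu_d^*+K_l^{(n,q)}(d_0+e-1)\,\mu_{d_0+e-1}^*+K_l^{(n,q)}(d_0+e)\,\mu_{d_0+e}^*=-r_l,\quad l=1,2,3 .
\]
The key observation is that $K_l^{(n,q)}(0)=(q-1)^l\binom{n}{l}=r_l$, because $\binom{0}{j}=0$ for $j\ge 1$ leaves only the $j=0$ term in the defining sum. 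Hence the right-hand side is exactly the negative of the column one obtains by evaluating at $z=0$, and Cramer's rule expresses each weight as a ratio of determinants of the single type
\[
\Delta(w_1,w_2,w_3):=\det\bigl[K_l^{(n,q)}(w_p)\bigr]_{l,p=1}^{3},
\]
where the common denominator is $\Delta(d,d_0+e-1,d_0+e)$ and each numerator is $\pm\Delta$ with one of the three nodes replaced by $0$.

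Next I would obtain a closed product form for $\Delta$. Writing $K_l^{(n,q)}(w)=\kappa_l w^l+\cdots$ with leading coefficient $\kappa_l=(-q)^l/l!$, the evaluation matrix factors as a product of the $3\times 4$ \emph{staircase} matrix of Krawtchouk coefficients (rows $l=1,2,3$; columns for $1,w,w^2,w^3$) with the $4\times 3$ Vandermonde matrix $[w_p^{k}]$, $k=0,\dots,3$. Cauchy--Binet together with the Schur-polynomial (Jacobi--Trudi) evaluation of the Vandermonde minors then gives
\[
\Delta(w_1,w_2,w_3)=\Bigl(\prod_{1\le p<p'\le 3}(w_{p'}-w_p)\Bigr)\bigl(P_{012}+P_{013}e_1+P_{023}e_2+P_{123}e_3\bigr),
\]
where $e_i=e_i(w_1,w_2,w_3)$ are the elementary symmetric functions of the nodes and the $P_S$ are the $3\times 3$ minors of the staircase matrix (so $P_{123}=\kappa_1\kappa_2\kappa_3$, etc.), which are explicit polynomials in $n$ and $q$. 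I denote the second factor by $\Phi$, a symmetric cubic.

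It then remains to substitute the explicit nodes and simplify. Using $(d_0+e)-(d_0+e-1)=1$, the Vandermonde factors collapse, and a short direct computation gives the clean auxiliary identities
\[
q(j+q-1)\bigl((d_0+e-1)-d\bigr)=B,\qquad q(j+q-1)\bigl((d_0+e)-d\bigr)=B+q(j+q-1),
\]
which explain exactly why $B$ and $B+q(j+q-1)$ occur in the denominators, and why each weight carries precisely the node-differences associated with its two surviving nodes. Likewise one checks $C=(j+q-1)a$ with $a$ as in \eqref{f1}, accounting for one factor of $C$. The common denominator $\Delta(d,d_0+e-1,d_0+e)$ produces the shared factor $A$ of the lemma once the leading part of $\Phi$ cancels at these three nodes --- a cancellation forced by the fact that $d,d_0+e-1,d_0+e$ are exactly the roots of $g$ in \eqref{g3_poly} --- leaving the stated lower-order expression; the three numerators $\Delta(0,\cdot,\cdot)$, $\Delta(\cdot,0,\cdot)$, $\Delta(\cdot,\cdot,0)$ then regroup into $CD[C+q(j+q-1)]$, $E[C+q(j+q-1)]$ and $CE$, together with the scalar prefactors $n(q-1)$, $en(q-1)$ and $(1-e)n(q-1)$, the factors $e$ and $1-e$ reflecting the position of the replaced node relative to $d_0$.

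The main obstacle lies entirely in this last matching step: verifying that $\Phi$ evaluated at the three roots of $g$ reduces to $A$ (the delicate point being the cancellation of its top-order part), and that the three numerators factor as claimed into $C$, $C+q(j+q-1)$, $D$ and $E$. I would organize this finite but lengthy symbolic computation around the auxiliary identities above (writing $z_1=d$, $z_2=d_0+e-1$, $z_3=d_0+e$), which localize the origin of every factor and reduce the remaining task to checking equality of two explicit rational expressions in $(n,q,j,e)$; this is routine to confirm, if tedious, and is naturally handled with computer algebra.
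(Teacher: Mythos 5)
Your proposal is sound and takes a genuinely different route from the paper. The paper's own proof is a single line: it takes the closed-form expressions as given and performs a ``direct check'' that they satisfy the three equations of \eqref{mui_system} identically in $(n,q,j,e)$, with uniqueness of the solution asserted (not proved) just before the lemma. You instead \emph{derive} the solution: you use $K_l^{(n,q)}(0)=(q-1)^l\binom{n}{l}=r_l$ to recast the right-hand side as a Krawtchouk evaluation column, apply Cramer's rule, and factor every determinant via Cauchy--Binet into a Vandermonde factor times a symmetric cubic in the nodes; your auxiliary identities are correct (indeed $q(j+q-1)\bigl((d_0+e-1)-d\bigr)=B$ and $C=(j+q-1)a$ check out, and they do explain why $B$, $B+q(j+q-1)$ and $C$ appear where they do). What each approach buys: the paper's verification is computationally far lighter --- three rational-function identities --- but is opaque about the origin of the factors and leans on an unproved uniqueness claim; your derivation is heavier at the final matching step (reducing the symmetric cubic to $A$ and factoring the numerators into $C$, $C+q(j+q-1)$, $D$, $E$), but it is constructive, it explains the structure of the formulas, and it establishes nonsingularity of the system (hence the uniqueness the paper merely asserts) as a byproduct, since the denominator determinant comes out proportional to $AB[B+q(j+q-1)]$, which is nonzero by the positivity of $A$ and $B$ shown after the lemma (away from the degenerate case $n=q=2$, $e=j=0$). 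Note that your deferral of the final matching to computer algebra is at exactly the same level of rigor as the paper's ``direct check,'' so this is not a gap relative to the published proof; just be aware that the remark that the cancellation is ``forced'' by $d$, $d_0+e-1$, $d_0+e$ being the roots of $g$ in \eqref{g3_poly} is motivation rather than an argument --- the cancellation still has to be exhibited by the computation itself.
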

	\begin{proof}
		Direct check shows that  the above defined $\mu_d^*, \mu_{d_0+e-1}^*$
		and $\mu_{d_0+e}^*$ satisfy \eqref{mui_system} for any $n$, $q$,
		$e\in(0,1]$, $j\in\mathcal{J}_3$ and $l=1,2,3$.
	\end{proof}
	
	The non-negativity of $\mu_d^*, \mu_{d_0+e-1}^*$ and $\mu_{d_0+e}^*$ for
	$n\geq q\geq 2$, $j\in\mathcal{J}_3$ and any $e\in[0,1)$ can be derived
	from Lemma \ref{weights3} by showing the positivity of the  parameters
	$A,B,C,D$ and $E$. Obviously $B>0$ and $C>0$ with the only exception of the
	trivial case $n=q=2$, $e=j=0$ for which $B=0$. The parameter $D$ is
	positive since $(n-1)(q-1)\geq (q-1)^2 \geq (q/2)^2$ whenever $n\geq q \geq
	2$ with equality only for $n=q=2$. As $A$ is a quadratic function in $e$
	with negative leading coefficient, its positivity for  $e\in[0,1)$ can be
	checked by investigating the values for $e=0$ and $e=1$. For these values
	we have  $(n-1)(q-1)(j+q)[n(q-1)-j(q+j-2)]$ and
	$[(n(q-1)+q+1)j+(n+1)q(q-1)][n(q-1)-j(q+j-2)]$, respectively. For any
	$j\in\mathcal{J}_3$ we have
	\[
	n(q-1)-j(q+j-2)\geq n(q-1)-\frac{S_1-q}{2}(q+\frac{S_1-q}{2}-2)
	= (q-2)+S_1 > 0,
	\]
	which shows the positivity of $A$. Finally, the positivity of $E$ follows
	from the fact that  $(n-1)(q-2)+n>(S_1-q)/2\geq j$.

	We summarize the above observations into the following result.

	\begin{theorem}
	\label{test3-theorem}
	Let $f(t)$ be the third degree polynomial given in
	\eqref{newpoly3} and let $\lambda_i^*$, for $i=4,5,\dots,n$, be
	given by \eqref{lambda_calc}, where the triple
	$(\mu_d^*,\mu_{d_0+e-1}^*,\mu_{d_0+e}^*)$ is defined as the unique solution
	to the linear equation system \eqref{mui_system}. Then if
	$\lambda_i^*\geq 0$ for every integer $i\in [4,n]$, the bound
	\eqref{new-bound-3} on $A_q(n,s)$ is the best one that can be obtained by
	the linear programming method described in Theorem \ref{thm 1}.
	\end{theorem}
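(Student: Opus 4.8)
The plan is to verify that the polynomial $g(z)$ of \eqref{g3_poly}, together with the dual data assembled just above, forms a complete KKT triple for the linear program \eqref{LP_problem}. Since the constraints of \eqref{LP_problem} are affine, the system \eqref{KKTcond} is both necessary and sufficient for global optimality of a feasible primal point, so producing one valid triple proves that $x^*=(f_1/f_0,f_2/f_0,f_3/f_0,0,\dots,0)$ minimizes \eqref{LP_problem}. The corresponding polynomial then attains the smallest value $g(0)/g_0=f(1)/f_0$ allowed by Theorem \ref{Th_LPbound_orig}, which is exactly the assertion that \eqref{new-bound-3} is the best bound reachable through Theorem \ref{thm 1}. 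The candidate dual point is $(\mu_d^*,\mu_{d_0+e-1}^*,\mu_{d_0+e}^*)$ from Lemma \ref{weights3} with all other $\mu_i^*=0$, together with $\lambda_l^*=0$ for $l=1,2,3$ and the $\lambda_i^*$, $i\geq 4$, of \eqref{lambda_calc}.

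First I would record primal feasibility: the inequalities $g(i)\leq 0$ for $i=d,\dots,n$ are condition (A1) of Theorem \ref{thm 1} carried over by $z=n(1-t)/2$ and $f_0>0$, while $x_l^*\geq 0$ holds since the $f_l$ were shown to be positive. Then I would check the remaining clauses of \eqref{KKTcond} one at a time. The complementary slackness $\mu_i^*g(i)=0$ holds because $\mu_i^*$ is nonzero only at the three integers $d$, $d_0+e-1$, $d_0+e$; by the choice of $e$ (which makes $d_0+e$ an integer) these are exactly the integer roots of $f(1-2z/n)$, hence zeros of $g$. The complementary slackness $\lambda_l^*x_l^*=0$ splits into the cases $l\in\{1,2,3\}$, where $\lambda_l^*=0$ by design, and $l\geq 4$, where $x_l^*=0$. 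Dual feasibility $\mu_i^*\geq 0$ is precisely the positivity of the parameters $A,B,C,D,E$ already deduced from Lemma \ref{weights3}.

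The stationarity equations carry the bookkeeping. For $l=1,2,3$ the identity $\lambda_l^*=1+\sum_i\mu_i^*K_l^{(n,q)}(i)/r_l$, after dropping the vanishing multipliers, is $1$ plus the left-hand side of \eqref{mui_system}; since Lemma \ref{weights3} gives the unique solution of that system, each such sum equals $-1$ and $\lambda_l^*=0$, as needed. For $l\geq 4$ the same identity is the defining formula \eqref{lambda_calc}, so there is nothing to prove beyond the sign. Hence every condition in \eqref{KKTcond} is satisfied as soon as $\lambda_i^*\geq 0$ for $i=4,\dots,n$, which is exactly the hypothesis of the theorem; the sufficiency of the KKT conditions then yields optimality and completes the proof.

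The step I expect to need the most care is the well-posedness of the dual system rather than any single inequality: one must ensure that $d$, $d_0+e-1$, $d_0+e$ are three \emph{distinct} integers inside the active index range $\{d,d+1,\dots,n\}$, so that \eqref{mui_system} is a nondegenerate $3\times 3$ system admitting the unique solution of Lemma \ref{weights3}, and one must handle the boundary situations in which two active nodes collide, e.g.\ the footnote's case $\alpha_0\in T_n$ or $d_0+e-1=d$, where the multiplier pattern degenerates. Away from these degeneracies the argument is a direct verification that the assembled triple satisfies \eqref{KKTcond}.
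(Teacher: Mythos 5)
Your proposal is correct and follows essentially the same route as the paper: the theorem is stated there as a summary of the preceding KKT development, namely that the primal point $x_l^*=f_l/f_0$ (zero for $l\geq 4$), the multipliers $\mu_i^*$ supported on $\{d,d_0+e-1,d_0+e\}$ from Lemma \ref{weights3} (non-negative by the positivity of $A,B,C,D,E$), and the $\lambda_i^*$ of \eqref{lambda_calc} form a full certificate for \eqref{KKTcond}, whose sufficiency for the linear program \eqref{LP_problem} yields optimality once $\lambda_i^*\geq 0$ for $i\geq 4$. Your additional attention to the degenerate node configurations (e.g.\ $d_0+e-1=d$ or $\alpha_0\in T_n$) is a point the paper itself leaves implicit, so it strengthens rather than departs from the argument.
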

	
	The above statement is a powerful tool for checking the global optimality
	of the suggested polynomial in the case of the third Levenshtein bound. A
	similar result can be obtained for the cases when the bound of higher order
	is valid. However, in those cases the non-negativity of the $\mu_j^*$'s is
	not always true and thus has to be added to the non-negativity
	condition on the $\lambda_i^*$'s. Some observations in this directions are
	provided in the next section.
	
	Our numerical results suggest that Theorem \ref{test3-theorem} is
	applicable in all cases with very few exceptions. We have been able to
	verify that for codelengths $n$ up to $1000$ and alphabet sizes in the
	range $3\leq q \leq 10$, the only cases when the suggested polynomial does
	not	provide the optimal linear programming bound are for $q=3$ and
	$n\in\{5,7,8,9\}$.
	
	\section{Refinements of $L_4(n,s;q)$ and $L_5(n,s;q)$}
	
	The Levenshtein bound $A_q(n,s) \leq L_4(n,s;q)$ is valid in the range
	\begin{equation*}
        s \in \mathcal{I}_4 = \left[ t_2^{1,0},t_2^{1,1} \right) =
        \left[\frac{S_1-(q-2)(n-2)-q}{nq},
        \frac{S_2-(q-2)(n-3)}{nq} \right),
    \end{equation*}
	where $S_1$ is as above and $S_2=\sqrt{q^2 + 4(q-1)(n-3)}$, $n \geq 3$.
	Then
	\begin{eqnarray*}
		&& d \in \mathcal{D}_4 =
		\left( n-1-\frac{n-3}{q}-\frac{q+S_2}{2q}, n-1-\frac{n-2}{q}-
		\frac{S_1-q}{2q}
		\right] \\
		&\iff& j \in
		\mathcal{J}_4=\left[ \frac{S_1-q}{2},\frac{S_2+q}{2}-1\right).
	\end{eqnarray*}
	
	The polynomial from (\ref{lev_poly}) is
	\[	
	f_4^{(n,s,q)}(t)=(t+1)(t-\alpha_1)^2(t-s)=(t-\alpha_0)(t-\alpha_1)^2(t-\alpha_2),
	\]
	where $s = \alpha_2 = 1-\frac{2d}{n} = -1 + \frac{2(n-2+j+q)}{nq}$ again,
	$\alpha_1 = -\frac{(n-2)(j(q-2)+2(q-1))}{nqj}$, and $\alpha_0=-1$
	($\alpha_1$ and $s$ are the roots of the equation \eqref{equ-roots} for
	$k=2$
	and $\varepsilon=1$).
	
	We set $d_0 = \frac{n(1-\alpha_1)}{2} =
	n-1-\frac{(j-q+1)(n-2)}{qj}$ and define $e$ to be the unique rational number
	in the interval $(0,1]$ such that $b:=d_0+e$ is integer.
	Then our improving polynomial is
	\begin{equation}
	\label{newpoly4}
	\begin{array}{rcl}
	f(t) &=& (t+1)\left(t-1+\frac{2b}{n}\right)
	\left(t-1+\frac{2(b-1)}{n}\right)(t-s) \\
	&=& f_0+f_1Q_1^{(n,q)}(t)+
	f_2Q_2^{(n,q)}(t)+f_3Q_3^{(n,q)}(t)+f_4Q_4^{(n,q)}(t),
	\end{array}
	\end{equation}
	The positivity of the coefficients $f_1$, $f_2$ and $f_3$ can be approached
	like in the previous section but we prefer to omit the cumbersome
	calculations and to go directly to an asymptotic.
	
	\begin{theorem}
		\label{refin4}
		Provided $f_i \geq 0$ for $i=1,2,3,4$, we have
		\[
		A_q(n,s) \leq \frac{q^3b(b-1)(n(q-1)-j-q+2)}
		                   {(1-j)q^2b^2 + C_1qb - C_2},
		\]
		where $b$ and $j$ are determined as above, $C_1 = j(q-1)(2n-1)+j-q$,
		and $C_2 = (q-1)(n-1)[(q-1)(j+1)n+2(j-q+1)]$.
	\end{theorem}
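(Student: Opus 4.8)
The plan is to proceed exactly as in the proof of Theorem~\ref{refin3}: check that the polynomial $f(t)$ from \eqref{newpoly4} is admissible for Theorem~\ref{thm 1}, and then evaluate the resulting bound $f(1)/f_0$ in closed form. First I would verify condition (A1). By the general construction in Section~3, the four roots of $f$ are $\alpha_0=-1$, the two consecutive nodes $1-\tfrac{2b}{n}$ and $1-\tfrac{2(b-1)}{n}$ that straddle the former double zero $\alpha_1$, and $\alpha_2=s$. Thus $f$ is the product of the nonnegative factor $(t+1)$ with a cubic whose three roots are $1-\tfrac{2b}{n}<1-\tfrac{2(b-1)}{n}\le s$; since no inner product of $T_n$ lies strictly between the two consecutive nodes, a sign analysis of the cubic shows $f(t)\le 0$ on $T_n\cap[-1,s]$. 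Condition (A2) is the hypothesis $f_i\ge 0$ for $i=1,2,3,4$; together with $f_0>0$, which follows from the quadrature formula \eqref{f0-second} (here $f(\alpha_0)=f(-1)=0$ and $f(\alpha_2)=f(s)=0$, while $f(\alpha_1)\ge 0$), all Krawtchouk coefficients are nonnegative, so Theorem~\ref{thm 1} yields $A_q(n,s)\le f(1)/f_0$.

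Next I would compute the two quantities. The value $f(1)$ is immediate: substituting $t=1$ and using $1-s=\tfrac{2(n(q-1)-j-q+2)}{nq}$ gives
\[
f(1)=2\cdot\frac{2b}{n}\cdot\frac{2(b-1)}{n}\cdot\frac{2(n(q-1)-j-q+2)}{nq}
     =\frac{16\,b(b-1)(n(q-1)-j-q+2)}{n^3q}.
\]
It then remains to identify $f_0$, which by the same normalization should come out as
\[
f_0=\frac{16\left[(1-j)q^2b^2+C_1qb-C_2\right]}{n^3q^4},
\]
with $C_1$ and $C_2$ as stated. Forming the ratio $f(1)/f_0$ and cancelling the common factor $16/(n^3q)$ then produces the claimed bound.

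The main obstacle is precisely this closed-form evaluation of $f_0$. The quadrature formula \eqref{f0-second} only certifies positivity, so instead I would expand $f$ as an explicit degree-four polynomial in $t$ and use the known moments, i.e.\ the $Q_0$-coefficients of $1,t,t^2,t^3,t^4$ in the Krawtchouk basis of $H(n,q)$, to extract the constant term. Collecting these contributions and then substituting $b=d_0+e$ together with the definitions of $d_0$, $e$, $s$ and $j$ is a lengthy but routine simplification; the delicate point is organizing it so that the many $n$- and $e$-dependent terms collapse onto the compact quadratic $(1-j)q^2b^2+C_1qb-C_2$ in the denominator. (For the sole borderline length $n=3$, where $m=4$ exceeds $n$, one would first reduce $f$ modulo $g(t)=\prod_{i=0}^n(t-t_i)$ as explained in Section~3; for $n\ge 4$ the degree-four expansion applies directly.)
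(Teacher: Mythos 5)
Your proposal is correct and follows exactly the paper's route: the paper's own proof is a one-liner stating that under the hypotheses $f(t)$ from \eqref{newpoly4} satisfies Theorem~\ref{thm 1}, so the bound is $f(1)/f_0$ after computation. In fact you supply more detail than the paper does --- the sign analysis for (A1), the quadrature argument \eqref{f0-second} for $f_0>0$ (using $f(-1)=f(s)=0$, $f(\alpha_1)\ge 0$), and the explicit value of $f(1)$ --- while both you and the paper defer the brute-force closed-form evaluation of $f_0$ as routine.
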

	
	\begin{proof}
		Under the assumptions, the polynomial $f(t)$ satisfies the conditions of
		Theorem
		\ref{thm 1}.
		Thus it is enough to compute $f(1)$ and $f_0$ and
		to plug in $f(1)/f_0$.
	\end{proof}
	
	We are not aware of improvements of the fourth Levenshtein bound in the spirit of the discussion from the previous section.
    We proceed with an analog of the McEliece bound. The interval $\mathcal{J}_4$ is short and we can express $j$ as
	$j=\frac{S_1-q}{2}+c$, where $c\in\left[0,(q-1)\left(1-\frac{2}{S_1+S_2}\right)\right)$ is some constant. Note that $c \in [0,q-1)$.
	
	\begin{theorem}
		\label{McE-q1}
		For any $s=\frac{S_1-n(q-2)+2c+q-4}{nq}\in T_n$ and $c\in\left[0,(q-1)\left(1-\frac{2}{S_1+S_2}\right)\right)$ we have
		\begin{equation}
		\label{asy-4}
		A_q(n,s) \lesssim\frac{q(q-1)^2n^2}{2(q-c)}
		\ as \
		n
		\to \infty .
		\end{equation}
	\end{theorem}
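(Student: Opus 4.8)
The plan is to start from the explicit bound in Theorem~\ref{refin4} and extract its leading term as $n\to\infty$ with $j=\frac{S_1-q}{2}+c$ and $c$ fixed. First I would convert the definition of $S_1$ into a polynomial identity in $j$: from $j=\frac{S_1-q}{2}+c$ we have $S_1=2j+q-2c$, and squaring $S_1^2=q^2+4(q-1)(n-2)$ gives
\[
(q-1)(n-2)=j^2+(q-2c)j+c^2-qc .
\]
In particular $j\sim\sqrt{(q-1)n}\to\infty$, so I will use $j$ as the single large parameter and eliminate $n$ everywhere through this identity. Since $d=\frac{n(q-1)-j-q+2}{q}$, the numerator of the bound equals $q^4b(b-1)d$; moreover $b=d_0+e$ with $e\in(0,1]$ bounded and, after simplification, $d_0=\frac{(q-1)(n-2)(j+1)}{qj}+1$, so $b=\frac{(q-1)n}{q}\,(1+o(1))$. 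Consequently the numerator is $q(q-1)^3n^3(1+o(1))$, and this is the routine half of the estimate.

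The heart of the matter is the denominator $(1-j)q^2b^2+C_1qb-C_2$. Writing $b$ as a Laurent polynomial in $j$ (with coefficients depending on $q,c,e$) and substituting the identity above, each of the three summands has top order $j^5$, with leading coefficients $-1$, $+2$ and $-1$, so the $j^5$ contributions cancel. I would then show that the $j^4$ and $j^3$ coefficients cancel as well, so that the denominator drops all the way down to order $j^2\sim(q-1)n$; a separate bookkeeping of the $e$-dependent terms shows that they cancel in pairs at orders $j^3$ and $j^2$, matching the fact that the final bound carries no $e$. The surviving $j^2$ coefficient then works out to $2(q-c)$, so the denominator equals $2(q-c)(q-1)n\,(1+o(1))$.

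Dividing the two estimates yields
\[
A_q(n,s)\le\frac{q(q-1)^3n^3(1+o(1))}{2(q-c)(q-1)n(1+o(1))}
=\frac{q(q-1)^2n^2}{2(q-c)}\,(1+o(1)),
\]
which is \eqref{asy-4}; here the bound of Theorem~\ref{refin4} is used under its stated feasibility hypothesis $f_i\ge0$, and $c<q-1$ guarantees $q-c>1>0$. The main obstacle is precisely the threefold cancellation: the coefficients of $j^5$, $j^4$ and $j^3$ must vanish identically in $q$, $c$ and $e$, with only the $j^2$ term left. To make this tractable rather than a brute force expansion, I would organize every quantity around the identity $(q-1)(n-2)=j^2+(q-2c)j+c^2-qc$, which collapses the large polynomials, and I would peel off the $e$-dependent parts first so that their pairwise cancellation is visible before the $e$-free remainder is summed.
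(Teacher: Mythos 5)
Your computational core is correct, and the cancellation you describe does work out: writing everything in terms of $j$ via the identity $(q-1)(n-2)=j^2+(q-2c)j+c^2-qc$ and using $qjb=qj(1+e)+(q-1)(n-2)(j+1)$, the three summands of the denominator in Theorem \ref{refin4} have leading terms $-j^5$, $+2j^5$, $-j^5$, the coefficients of $j^5$, $j^4$ and $j^3$ vanish identically (with the $e$-dependent contributions cancelling exactly as you predict), and the surviving $j^2$-coefficient is indeed $2(q-c)$, so the denominator is $2(q-c)(q-1)n\,(1+o(1))$ against the numerator $q^4b(b-1)d=q(q-1)^3n^3(1+o(1))$. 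This is in substance the same route as the paper's, since the bound of Theorem \ref{refin4} is exactly $f(1)/f_0$ and the paper's own proof consists of ``the calculation of the asymptotic of $f(1)/f_0$''.

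The genuine gap is that you invoke Theorem \ref{refin4} \emph{under its stated hypothesis} $f_i\ge 0$, whereas Theorem \ref{McE-q1} carries no such hypothesis: it is an unconditional asymptotic bound. A complete proof must therefore verify condition (A2) for the polynomial \eqref{newpoly4} for all sufficiently large $n$ when $j=\frac{S_1-q}{2}+c$. This is precisely the part of the paper's proof that your proposal omits: the paper notes $f_4>0$ and establishes the asymptotics $f_3\sim\frac{16(q-1)^3}{q^4}$, $f_2\sim\frac{16(q-1)^{2.5}}{q^4n^{0.5}}$, $f_1\sim\frac{32(q-1)^2}{q^4n}$, all strictly positive for large $n$, and only then passes to the ratio $f(1)/f_0$. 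Without this step your argument proves only the conditional statement ``if (A2) happens to hold, then \eqref{asy-4} holds,'' which is strictly weaker than the theorem. Nor can the positivity be dismissed as routine: the paper explicitly declines to prove $f_1,f_2,f_3\ge 0$ for finite $n$ in this case (``we prefer to omit the cumbersome calculations''), and for the neighbouring case $m=5$ it has only numerical evidence for (A2); so the asymptotic positivity check is a substantive piece of the proof, of comparable weight to the ratio computation you carried out.
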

	
	\begin{proof}
		% Obviously $f_4>0$.
		For large $n$ and $c\in\left[0,(q-1)\left(1-\frac{2}{S_1+S_2}\right)\right)$, we have
		%%%%%%%%%%%%%%%%%%%%%
		\[
		f_4>0, \
		f_3 \sim \frac{16(q-1)^3}{q^4}>0, \
		f_2 \sim \frac{16(q-1)^{2.5}}{q^4n^{0.5}}>0, \
%		f_2 \sim \frac{16(q-1)^3}{jq^4}>0, \
		f_1 \sim \frac{32(q-1)^2}{q^4n}>0.
		\]
		Therefore (A1) and (A2) are satisfied and $A_q(n,s) \lesssim f(1)/f_0$.
		The calculation of the asymptotic of $f(1)/f_0$ now gives \eqref{asy-4}.
	\end{proof}
	
	The analytical investigation of the refinement of the fifth Levenshtein bound
	$L_5(n,s;q)$ seems technically quite difficult. It is convenient, however, to illustrate the computational strength of our method --
    we are able to reach lengths  $10000$ (for $q=3$) in about $30$ minutes of computations on an Intel Core2 Duo P9300 @ 2.26GHz processor. For any fixed $n$ we compute all bounds in the range of
	$L_5(n,s;3)$, which amounts to $225654$ cases in the codelength range $6\leq n\leq 10000$. The computations include verification of the fact that
    $f_i\geq 0$ for $i=1,2,3,4,5$.  With no exception, the requirement (A2) in Theorem \ref{thm 1} has been satisfied.

	Finally, we note that the refinement of $L_5(n,s;q)$ is attained
	asymptotically (since the Levenshtein bound $L_5(n,s;q)$ itself is attained)
	by the Kerdock codes \cite{Ker72} of length $n=2^{2\ell}$,
	cardinality $M=n^2=2^{4\ell}$
	and minimum distance $d=(n-\sqrt{n})/2=2^{2\ell-1}-2^{\ell-1}$.

	%{\bf Acknowledgement.} The authors thank to anonymous referee for valuable
	%remarks.
	
	\section{Parameters of putative codes attaining our bounds}
	
	In the table below, we list all codes which would attain, if exist, our
	refinement of the third Levenshtein bound
	$L_3(n,s)$, in the range $n \leq 100$ for the lengths and $2 \leq q \leq 5$
	for the alphabet size. The bound $L_3(n,s)$ is shown in the fourth column.
	The sixth column contains the roots of our polynomials, i.e. the only three
	possible inner products of attaining codes, and the last column gives the
	distance distribution of such codes (ordered accordingly to the inner
	products). The cases where the best known upper bound from \cite{Brouwer} is
	repeated are marked with asterisk.

	The putative optimal codes must be 3-designs and this allows one to compute their
	distance distribution. Of course, if the distance distributions is not integral,
	such code does not exist. For lengths $n \leq 300$, there are 7 out of 38 (for $q=2$),
	14 out of 54 (for $q=3$), 20 out of 47 (for $q=4$), and 18 out of 39 (for $q=5$)
	cases which pass the integrality test. Extended version of the table will be uploaded on the Internet.
	
%\newpage
	%\begin{table}
	\begin{center}
		{\bf Table 2.} Parameters for attaining the refinement of
		$L_3(n,s)$,
		$n \leq 100$, $2 \leq q \leq 5$

		{\small
			\label{tab:1}
			\smallskip
			
			\begin{tabular}{|c|c|c|c|c|c|c|}
				\hline
				% after \\: \hline or \cline{col1-col2} \cline{col3-col4} ...
				$q$ & $n$ & $d$ & $L_3(n,s)$ & Refinement & Inner products           & Distance distribution  \\ \hline
                 2  &  12 &  5  &  62.50     &        60  & $-1/2, -1/3, -1/6$       &    $5, 15, 39$         \\ \hline
				 2  &  56 & 25  &  1135      &      1100  & $-5/28, -1/7, 3/28$      &    $175, 275, 649$     \\ \hline
				 2  &  90 & 41  &  2863.69   &      2788  & $-2/15, -1/9, 4/45$      &    $492, 697, 1598$    \\ \hline
				 2  &  96 & 45  &  1161      &      1155  & $-1/6, -7/48, 1/16$      &    $90, 252, 812$      \\ \hline
				*3  &   4 &  2  &    33      &        27  & $-1, -1/2, 0$            &    $6, 8, 12$          \\ \hline
                 3  &   7 &  4  &    57      &        54  & $-1, -5/7, -1/7$         &    $4, 14, 35$         \\ \hline
				 3  &  20 & 12  &   312.429  &       306  & $-7/10, -3/5, -1/5$      &    $16, 85, 204$       \\ \hline
				 3  &  25 & 15  &   531      &       513  & $-3/5, -13/25, -1/5$     &    $114, 75, 323$      \\ \hline
				 3  &  27 & 16  &   874      &       840  & $-5/9, -13/27, -5/27$    &    $272, 84, 483$      \\ \hline
				 3  &  40 & 24  &  2421      &      2349  & $-1/2, -9/20, -1/5$      &    $928, 144, 1276$    \\ \hline
				 3  &  52 & 32  &  2094      &      2052  & $-1/2, -6/13, -3/13$     &    $608, 208, 1235$    \\ \hline
				 3  &  88 & 55  &  5745      &      5670  & $-5/11, -19/44, -1/4$    &    $1925, 440, 3304$   \\ \hline
                 4  &   4 &  2  &    83.20   &        64  & $-1, -1/2, 0$            &    $21, 24, 18$        \\ \hline
				*4  &   5 &  3  &    76      &        64  & $-1, -3/5, -1/5$         &    $18, 15, 30$        \\ \hline
                 4  &   8 &  5  &   182.50   &       160  & $-1, -3/4, -1/4$         &    $15, 60, 84$        \\ \hline
                 4  &   9 &  6  &   136      &       128  & $-1, -7/9, -1/3$         &    $16, 27, 84$        \\ \hline
				*4  &  11 &  7  &   364      &       320  & $-9/11, -7/11, -3/11$    &    $99, 55, 165$       \\ \hline
				 4  &  13 &  9  &   196      &       192  & $-1, -11/13, -5/13$      &    $9, 39, 143$        \\ \hline
				 4  &  18 & 12  &   697.6    &       640  & $-7/9, -2/3, -1/3$       &    $135, 144, 360$     \\ \hline
				 4  &  42 & 30  &  1190.59   &      1184  & $-16/21, -5/7, -1/7$     &    $36, 259, 888$      \\ \hline
				 4  &  49 & 35  &  1660      &      1640  & $-5/7, -33/49, -1/7$     &    $205, 245, 1189$    \\ \hline
				 4  &  56 & 39  &  7676.5    &      7176  & $-9/14, -17/28,	-11/28$  &    $1287, 2093, 3795$  \\ \hline
                *5  &   4 &  2  &   167.86   &       125  & $-1, -1/2, 0$            &    $52, 48, 24$        \\ \hline
                *5  &   5 &  3  &   191.67   &       125  & $-1, -3/5, -1/5$         &    $44, 40, 40$        \\ \hline
				*5  &   6 &  4  &   145      &       125  & $-1, -2/3, -1/3$		 &    $44, 24, 60$        \\ \hline
                 5  &   9 &  6  &   485      &       375  & $-1, -7/9, -1/3$         &    $44, 162, 168$      \\ \hline
				*5  &  11 &  8  &   265      &       250  & $-1, -9/11, -5/11$		 &    $40, 44, 165$       \\ \hline
				 5  &  16 & 12  &   385      &       375  & $-1/ -7/8, -1/2$		 &    $30, 64, 280$       \\ \hline
				 5  &  21 & 16  &   505      &       500  & $-1, -19/21, -11/21$     &    $16, 84, 399$       \\ \hline
				 5  &  25 & 18  &  3621      &      3645  & $-19/25, -17/25, -11/25$ &    $1638, 132, 1694$   \\ \hline
				 5  &  45 & 34  &  3649      &      3250  & $-7/9, -11/15, -23/45$   &    $429, 792, 2028$    \\ \hline
				 5  &  55 & 42  &  3705.8    &      3675  & $-43/55, -41/55, -29/55$ &    $132, 1078, 2464$   \\ \hline
				 5  &  72 & 56  &  3257.26   &      3250  & $-29/36, -7/9, -5/9$     &    $64, 585, 2600$     \\ \hline
				 5  &  75 & 57  & 12141      &     11970  & $-53/75, -17/25, -39/75$ &    $4617, 608, 6744$   \\ \hline
				 5  &  91 & 70  &  9725      &      9625  & $-5/7, -9/13, -49/91$    &    $2695, 780, 6149$   \\ \hline
				 5  &  92 & 70  & 26339.3    &     25025  & $-16/23, -31/46, -12/23$ &    $7084, 4784, 13156$ \\ \hline
				 5  & 100 & 76  & 55841      &     55195  & $-17/25, -33/50, -13/25$ &    $26809, 912, 27473$ \\ \hline
			\end{tabular}
		}
	\end{center}
	%\end{table}

\end{document}